\newtheorem{definition}{Definition}
\newtheorem{proposition}{Proposition}
\DeclareMathOperator*{\argmin}{arg\,min}
\DeclareMathOperator*{\argmax}{arg\,max}
\definecolor{CiteNavy}{HTML}{0B3D91}
\definecolor{square}{HTML}{876b3a}
\title{Reverse Stress Testing Geopolitical Risk in Corporate Credit Portfolios: A Formal and Operational Framework\thanks{We sincerely thank David Alcaud, Adrien Aubert, Aude Couderc, Cyril Regnier, and all the members of the Square Research Center for their valuable insights and feedback.}}
\author{Christophe Hurlin\thanks{University of Orl\'{e}ans (LEO) and Institut Universitaire de France (IUF), Rue de Blois, 45067 Orl\'{e}ans, France. Corresponding author: christophe.hurlin@univ-orleans.fr}, Quentin Lajaunie\thanks{Square Research Center, Rue des poissonniers, 92220, Neuilly-sur-Seine \& University of Orl\'{e}ans (LEO), Rue de Blois, 45067 Orl\'{e}ans, France.}, Yoann Pull\thanks{Square Research Center, Rue des poissonniers, 92220, Neuilly-sur-Seine \& University of Orl\'{e}ans (LEO), Rue de Blois, 45067 Orl\'{e}ans, France.}}
\date{\today}
\begin{document}

\maketitle

\begin{abstract}
\noindent This paper proposes a formal framework for reverse stress testing geopolitical risk in corporate credit portfolios. A joint macro-financial scenario vector, augmented with an explicit geopolitical risk factor, is mapped into stressed probabilities of default and losses given default. These stresses are then propagated to portfolio tail losses through a latent factor structure and translated into a stressed CET1 ratio, jointly accounting for capital depletion and risk-weighted asset dynamics. Reverse stress testing is formulated as a constrained maximum likelihood problem over the scenario space. This yields a geopolitical point reverse stress test, or design point, defined as the most probable scenario that breaches a prescribed capital adequacy constraint under a reference distribution. The framework further characterises neighbourhoods and near optimal sets of reverse stress scenarios, allowing for sensitivity analysis and governance oriented interpretation. The approach is compatible with internal rating based models and supports implementation at the exposure or sector level.
\end{abstract}

\vspace{0.8cm}

\noindent \textit{Keywords:} Reverse stress testing; Geopolitical risk; Corporate credit portfolios; Capital adequacy; Scenario plausibility.
\medskip

\noindent \textit{JEL codes:} G21; G32; C61; D81.

\newpage

\section{Introduction}

\begin{quote}
\emph{``The 2026 thematic stress test will take the form of a reverse stress test on geopolitical risk. 
Instead of assessing how a bank performs under a given scenario, reverse stress tests begin by asking what kind of shock would lead to a severe capital depletion.''} \\
--- Claudia Buch, Chair of the ECB Supervisory Board, 
\href{https://www.bankingsupervision.europa.eu/press/blog/2025/html/ssm.blog20250905~781d5f81aa.en.html}{ECB Banking Supervision Blog}, 5 September 2025.
\end{quote}

Reverse stress testing provides a fundamentally different perspective from standard forward-looking supervisory stress tests. Rather than assessing how a bank performs under a given, common adverse scenario, reverse stress tests ask which shocks would lead to a severe depletion of capital in an individual institution. This inversion is particularly valuable in environments characterised by non-linear risk interactions and regime shifts, where severe outcomes may arise from combinations of shocks that are unlikely to be captured by conventional scenario narratives.

This perspective is especially relevant for \emph{geopolitical risk}. The European Central Bank (ECB) identifies geopolitical tensions as a key source of financial vulnerability, highlighting, among other factors, the spike in market volatility following the abrupt announcement of new U.S.\ tariffs in April 2025 and the continued impact of the war in Ukraine \citep{ECB2025fsr}. For banks, geopolitical shocks tend to materialise in a discrete and regime-like manner, propagating simultaneously through credit risk, funding conditions, and market valuations, thereby challenging the design and calibration of standard stress-test scenarios.

Reflecting these characteristics, the ECB announced on 12 December 2025 that it will conduct its first reverse stress test explicitly focused on geopolitical risk in 2026 \citep{ECB_SSM_PR_2025}. The exercise will cover 110 directly supervised banks and will prescribe a capital-based breakdown outcome. Banks will be required to identify relevant geopolitical events that could lead to a depletion of at least 300 basis points in their Common Equity Tier~1 (CET1) capital and to assess not only solvency implications but also potential effects on liquidity and funding conditions.\footnote{To ensure proportionality and cost efficiency, the ECB will run the exercise as part of the 2026 ICAAP data-collection process, with aggregate conclusions expected in summer 2026. In line with previous ECB thematic stress tests, the exercise is not intended to have direct implications for Pillar~2 Guidance; instead, identified weaknesses may feed into the SREP assessment in a qualitative manner. Banks are also expected to describe potential management actions and to demonstrate robust governance and operational resilience frameworks.} The exercise is intended to support supervisory dialogue, governance assessment, and contingency planning, rather than to mechanically determine capital requirements.

Despite this growing policy relevance, existing academic frameworks do not yet provide an integrated treatment of this problem. Quantitative reverse stress testing has been developed primarily for market-risk portfolios and loss-based outcomes \citep{Glasserman2015,grundke2018macroeconomic_rst,Pesenti2019}, while credit stress-testing models are typically forward-looking and focus on the propagation of macroeconomic shocks under given scenarios. Although geopolitical risk is now measurable \citep{caldara2022geopolitical}, it has not been explicitly embedded into a reverse stress testing architecture that links geopolitical and macro-financial shocks to credit risk, portfolio losses, risk-weighted assets, and regulatory capital outcomes.

In this paper, we propose the first formal and operational framework for reverse stress testing geopolitical risk in corporate credit portfolios. Reverse stress testing is defined as the identification of the \emph{most probable} scenario in the admissible scenario space that breaches a prescribed capital adequacy constraint, where scenario plausibility is evaluated through a likelihood function defined relative to a reference distribution for joint geopolitical and macro-financial shocks. Within this framework, geopolitical risk enters explicitly as a component of the scenario vector, and the reverse stress test is formulated as a constrained maximum likelihood problem.

The proposed framework provides a complete and internally coherent mapping from scenarios to regulatory capital outcomes. Joint geopolitical and macro-financial scenarios are translated into stressed probabilities of default (PD) and losses given default (LGD), allowing for heterogeneous sectoral sensitivities. These stressed credit-risk parameters are propagated to portfolio tail losses through a tractable latent-factor dependence structure and mapped into a stressed CET1 ratio by jointly accounting for capital depletion and risk-weighted asset dynamics. As a result, the framework enables the identification and analysis of scenarios consistent with a prescribed capital depletion, including the 300-basis-point reduction in CET1 capital explicitly referenced by the ECB in its geopolitical reverse stress test \citep{ECB_SSM_PR_2025}. 
The resulting \emph{geopolitical point reverse stress scenario}, or \emph{design point}, represents the most plausible joint geopolitical and macro-financial configuration leading to a breach of the capital threshold under the reference distribution.

Beyond the identification of a single breakdown scenario, our framework explicitly accommodates \emph{sets of reverse stress scenarios}, in line with supervisory recommendations \citep{BCBS2018StressTesting,ECB2025StressTestReport}. A local neighbourhood of reverse stress scenarios around the geopolitical design point captures scenario uncertainty in the vicinity of the most probable capital breach. Complementarily, a near-optimal reverse stress scenario set gathers alternative capital-breaching scenarios whose likelihood remains close to that of the design point, even when they correspond to qualitatively different macro-financial narratives. These set-based constructions extend the exercise beyond a single benchmark scenario and support governance-oriented interpretation, sensitivity analysis, and supervisory dialogue.

The remainder of the paper is organised as follows. Section~\ref{Section:Related literature} reviews the related literature. Section~\ref{Section:Scenario space} introduces the scenario space, the geopolitical risk indicator, and the capital framework. Section~\ref{Section:Portfolio losses} develops the mapping from scenarios to portfolio losses and regulatory capital. Section~\ref{sec:rst_optim} formulates reverse stress testing as a constrained optimisation problem and introduces the geopolitical design point and associated scenario sets. Section~\ref{Section:Implementation} discusses implementation issues. Section~\ref{Section:Conclusion} concludes.

\section{Related literature}
\label{Section:Related literature}

This paper relates to three closely connected strands of research: quantitative reverse stress testing, stress testing of banking credit risk, and the measurement and integration of geopolitical risk into macro-financial scenarios. While each strand has developed substantial insights, their intersection remains largely unexplored.

\subsection{Quantitative reverse stress testing}

Reverse stress testing addresses the inverse of conventional stress testing by asking which configurations of risk drivers could plausibly generate a pre-specified adverse outcome. In contrast to forward stress testing, which evaluates the impact of given scenarios, reverse stress testing focuses on the joint severity and plausibility of scenarios leading to failure states. Although this reverse perspective has long been emphasised in supervisory guidance \citep{BCBS2009}\footnote{The Basel Committee on Banking Supervision explicitly refers to reverse stress testing as a complement to conventional stress testing, stating that ``\textit{a stress testing programme should also determine what scenarios could challenge the viability of the bank (reverse stress tests) and thereby uncover hidden risks and interactions among risks}.'' The document further explains that reverse stress tests ``\textit{start from a known stress test outcome, such as a breach of regulatory capital ratios, illiquidity or insolvency, and then ask what events could lead to such an outcome for the bank}'' (\citealp{BCBS2009}, p.~14).}, its formal quantitative implementation has emerged only gradually in the academic literature.

Early contributions emphasise that reverse stress testing can be conducted at different levels of formalisation, ranging from narrative scenario analysis to fully model-based approaches \citep{Grundke2011RST}. In practice, qualitative exercises rely primarily on expert judgement to construct institution-specific breaking scenarios, whereas quantitative approaches embed reverse stress testing within an explicit probabilistic framework. The latter fix a critical loss or failure threshold ex ante and search within a model-generated scenario space for configurations of risk factors that are most strongly associated with such outcomes. In this sense, reverse stress testing inverts the logic of historical or forward stress testing: the adverse outcome is specified first, and model simulations are subsequently mined to identify the most plausible drivers of distress. While such scenarios cannot generally be labelled ex post by reference to historical crises, the underlying models are calibrated to market or macro-financial data that reflect agents’ forward-looking assessments of future risks. As noted by \citet{grundke2018macroeconomic_rst}, this quantitative perspective is essential for moving beyond purely narrative exercises, even though fully developed quantitative reverse stress testing frameworks remain relatively scarce.

A dominant methodological approach defines reverse stress scenarios as the most likely configurations of risk factors conditional on an extreme loss event. In probabilistic terms, reverse stress scenarios correspond to conditional modes of the joint distribution of risk factors given that portfolio losses exceed a prescribed threshold. \citet{Glasserman2015} and \citet{Kopeliovich2015} provide seminal contributions to this approach by formulating reverse stress testing as a constrained optimisation problem. Analytical characterisations are obtained under elliptical distributions and linear loss mappings for tractability, but the underlying principle is more general and centres on identifying the most plausible drivers of severe losses. These results have been extended to richer distributional classes allowing for skewness and heavy tails, notably through skew-elliptical models and multivariate Student-type distributions \citep{Giorgi2016,Schroeder2023}. To relax parametric assumptions while preserving an explicit scenario-based interpretation, \citet{Glasserman2015} propose an empirical likelihood approach in which the reference distribution of risk factors is tilted so as to satisfy a tail loss constraint. 

A distinct but complementary stream shifts the focus from scenario identification to distributional robustness. \citet{Pesenti2019} propose a divergence-based framework for reverse sensitivity testing, in which probability measures are perturbed to the smallest extent necessary to violate a prescribed risk constraint. While this approach provides a powerful diagnostic of model fragility, it places less emphasis on the identification of explicit scenario vectors and may therefore offer more limited interpretability in terms of identifiable economic drivers.

More recently, high-dimensional dependence structures have been explored through copula-based and vine-copula models. These frameworks allow for non-linear dependence, tail dependence, and potentially multiple stress regimes, and can generate several candidate stress scenarios when the conditional distribution is multimodal \citep{zhou2024}. The increased flexibility comes at the cost of greater modelling complexity and calibration requirements, which may limit transparency and operational deployment in governance-oriented stress testing exercises.

Across these contributions, quantitative reverse stress testing has been developed predominantly for market-risk portfolios and loss-based risk measures such as Value-at-Risk or Expected Shortfall. The adverse outcome is typically defined as a portfolio loss exceeding a given threshold, and the mapping from risk factors to losses is often stylised. As a result, existing reverse stress testing frameworks largely abstract from the structural features of banking credit risk and from the regulatory capital mechanics that are central to banking stress testing.

\subsection{Reverse stress testing and banking credit risk}

In the banking context, stress testing primarily focuses on credit risk and capital adequacy. A large body of the literature develops forward-looking stress-testing frameworks that map macroeconomic scenarios into probabilities of default, losses given default, portfolio credit losses, and ultimately regulatory capital ratios (see \cite{Quagliariello2009,HenryKok2013,borio2014stress}, 
among many others). These frameworks underpin supervisory exercises such as CCAR and EBA stress tests and are well established in both academia and industry.\footnote{The Comprehensive Capital Analysis and Review (CCAR) in the United States and the European Banking Authority (EBA) stress tests in Europe are supervisory forward-looking exercises in which regulators prescribe adverse macro-financial scenarios and assess banks’ ability to maintain regulatory capital ratios under stress.}

By contrast, quantitative reverse stress testing has seen limited application to banking credit risk and typically abstracts from PD–LGD dynamics, portfolio credit structures, and regulatory capital mechanics. A small number of studies bring reverse stress testing closer to macroeconomic or banking applications. \citet{grundke2018macroeconomic_rst} formulates reverse stress scenarios at the macroeconomic level, identifying adverse aggregate conditions consistent with severe outcomes, but abstracts from the transmission of shocks through banking credit portfolios and regulatory capital channels. 
Similarly, \citet{Albanese2023} propose bottom-up quantitative reverse stress-testing frameworks for banks’ trading and counterparty exposures. These approaches focus on market-driven P\&L, valuation adjustments, and cost-of-capital metrics, rather than on banking book credit risk and regulatory capital constraints.

Overall, there is currently no formal quantitative reverse stress testing framework that embeds a credit-portfolio model and an explicit regulatory capital constraint within a single probabilistic optimisation problem. This gap is particularly salient given that supervisory definitions of bank failure are inherently capital-based rather than loss-based.

\subsection{Geopolitical risk as a macro-financial scenario driver}

Geopolitical risk has long been treated as a largely narrative and exogenous source of uncertainty, which has limited its formal integration into quantitative stress-testing frameworks. This view has evolved with the development of systematic, time-varying proxies that render geopolitical risk observable and suitable for econometric analysis \citep{caldara2022geopolitical,engle2023events,wei2024evaluating,ferrara_saadaoui2025_measuring_geoeconomics}; see Section~\ref{Subsection: Geopolitical_index}. Text-based indices and event-based measures can now be embedded in standard macro-financial models, allowing geopolitical risk to enter explicitly as a distinct dimension of the scenario space. Beyond academic indicators, geopolitical risk is also captured by proprietary indices developed by major asset managers and financial institutions (e.g.\ BlackRock, Amundi, Bloomberg), as well as by international organisations (e.g.\ the IMF and the World Bank), reflecting its growing role in applied macro-financial analysis and risk management.

A growing empirical literature documents the macro-financial relevance of geopolitical risk. Geopolitical tensions affect real economic activity, financial markets, and banking outcomes through multiple channels. In the banking sector, geopolitical risk is associated with lower bank stability and weaker capital positions \citep{phan2022geopolitical,behn2025insight}, with evidence that adverse effects on capital ratios are concentrated at high levels of geopolitical stress and mitigated by bank size and capital buffers. Beyond balance-sheet effects, geopolitical tensions also influence banks’ intermediation behaviour by raising borrowing costs and tightening nonprice lending conditions \citep{nguyen2023geopolitical}. At a more systemic level, recent cross-country evidence indicates that geopolitical risk amplifies financial stress spillovers and contributes to broader financial fragility by increasing bank risk and fueling asset price imbalances \citep{zhu2025global,wang2025geopolitical}. Taken together, these findings establish geopolitical risk as a systematic macro-financial driver and justify its inclusion alongside traditional macroeconomic variables in stress-testing exercises.

Translating this empirical evidence into operational stress-testing architectures, however, has proven more challenging, and the integration of geopolitical risk into quantitative stress-testing frameworks remains limited. A first attempt is provided by \citet{flament2026}, who incorporate geopolitical risk into forward-looking credit portfolio stress tests within a VAR--Merton framework using Generalized Impulse Response Function (GIRF) analysis. Their approach combines a structural VAR to generate dynamically coherent macro-financial responses to geopolitical shocks with a Merton-type credit satellite, yielding an analytical and internally consistent mapping from geopolitical risk shocks to portfolio default probabilities.
However, to our knowledge, no reverse stress-testing framework explicitly incorporates geopolitical risk, despite its identification as a supervisory priority by the ECB. More broadly, the inverse problem of identifying the most plausible combinations of geopolitical and macro-financial shocks that lead to a severe capital depletion outcome remains unaddressed.

\section{Scenario space, geopolitical risk, and capital framework}
\label{Section:Scenario space}

\subsection{Portfolio, capital structure, and breakdown threshold}

Consider a bank with a corporate credit portfolio composed of exposures indexed by $i = 1,\dots,n$. For each exposure $i$, let $\text{EAD}_i$ denote the exposure at default, $\text{PD}_i^0$ the current probability of default over the stress horizon, and $\text{LGD}_i^0$ the loss given default. The superscript $0$ refers to current, unstressed values. The total exposure is given by
\begin{equation}
  \text{EAD} = \sum_{i=1}^n \text{EAD}_i .
\end{equation}

The bank has Common Equity Tier~1 capital $\text{CET1}_0$ and risk weighted assets $\text{RWA}_0$\footnote{Common Equity Tier~1 (CET1) capital corresponds to the highest quality component of regulatory capital, composed mainly of common equity and retained earnings.}. 
The corresponding CET1 ratio is
\begin{equation}
  R_0 = \frac{\text{CET1}_0}{\text{RWA}_0}.
\end{equation}
Reverse stress testing is conducted relative to a breakdown threshold for the CET1 ratio, denoted by $R^{\star}$, with $R^{\star}>0$ and 
\begin{equation}
  R^{\star} < R_0 .
\end{equation}
The objective of the reverse stress test is to identify plausible geopolitical and macro financial scenarios under which the CET1 ratio stressed at horizon $h$, denoted by $R(s)$, falls below the breakdown threshold $R^{\star}$.

The choice of the threshold $R^{\star}$ is a key modelling input. In the present framework, $R^{\star}$ represents the level of capitalisation below which the bank is considered to experience severe capital distress, and its specification depends on the purpose of the exercise. In its recent announcement of a geopolitical risk reverse stress test, the ECB specifies that banks will be asked “\emph{to identify the most relevant geopolitical risk events that could lead to at least a 300-basis point depletion in }[their] \emph{Common Equity Tier~1 (CET1) capital}”. In line with this supervisory benchmark, we define $R^{\star}$ in relative terms as a fixed depletion of the initial CET1 ratio, namely
\begin{equation}
  R^{\star} = R_0 \times (1 - \Delta),
\end{equation}
with $\Delta = 0.03$.
This definition provides a transparent ratio based counterpart to a 300 basis point CET1 depletion and allows the reverse stress test to be formulated directly in terms of the CET1 ratio used in supervisory assessments.\footnote{Alternatively, the breakdown threshold can be defined as $\text{CET1}^{\star} = (1-0.03)\times\text{CET1}_0$, corresponding to a 300 basis point depletion of CET1 capital. This formulation is equivalent to the ratio based definition when risk weighted assets remain approximately stable over the stress horizon.}

An alternative is to anchor $R^{\star}$ to regulatory requirements by setting it equal to the Pillar~1 minimum CET1 ratio (4.5 per cent; \citealp{BCBS_RBC20}) plus the applicable capital buffers (e.g. the capital conservation buffer; \citealp{BCBS_RBC30}). Under this specification, the breakdown condition corresponds to a breach of the combined buffer requirement and directly aligns the reverse stress test with supervisory solvency thresholds. Both definitions are compatible with the proposed framework and yield complementary interpretations of the resulting reverse stress scenarios.

\subsection{Scenario vector and geopolitical risk indicator}

Let $x$ denote a $(d-1)$-dimensional vector of macro-financial shocks, interpreted as a realisation of a random vector $X \in \mathcal{X} \subseteq \mathbb{R}^{d-1}$. The components of $x$ may capture changes in real activity, interest rates, credit spreads, commodity prices, or sector-specific demand conditions. Let $g$ denote the geopolitical risk component, interpreted as a realisation of a scalar random variable $G \in \mathcal{G} \subseteq \mathbb{R}$. The variable $g$ is normalised so that $g=0$ corresponds to the current geopolitical environment, while positive values indicate an intensification of geopolitical risk. In the theoretical analysis, $G$ is treated as an abstract exogenous random variable capturing geopolitical shocks. In the empirical implementation, it is proxied by a geopolitical risk index; see Section~\ref{Subsection: Geopolitical_index} for details. The full scenario vector $s$ is defined as
\begin{equation}
  s =
  \begin{pmatrix}
    g \\
    x
  \end{pmatrix}
  \in \mathcal{S},
\end{equation}
where $\mathcal{S} := \mathcal{G} \times \mathcal{X} \subseteq \mathbb{R}^{d}$ denotes the admissible scenario set, and
$S = (G,X)^\top$ is the associated $d$-dimensional random vector.

The joint distribution of $S$ is denoted by $P_S(\,\cdot\,\vert\mathcal{I})$, where $\mathcal{I}$ represents the information set available at horizon $h$, at which the reverse stress test is conducted.\footnote{To simplify notation, we abstract from the explicit time dimension. Formally, the scenario vector $s$ should be indexed at time $t=h$, where $t=0$ denotes the date at which the reverse stress test is initiated, and the information set $\mathcal{I}$ should be indexed at time $t=0$.} In the reverse stress testing exercise, the optimisation is carried out over scenario realisations $s \in \mathcal{S}$, while the plausibility of such scenarios is assessed relative to the reference distribution $P_S(\,\cdot\,\vert\mathcal{I})$. This probability measure is defined on the Borel $\sigma$-algebra of $\mathbb{R}^d$ and captures both the marginal behaviour of geopolitical and macro-financial shocks and their dependence structure. The formulation is deliberately general and encompasses a wide class of modelling approaches. The distribution $P_S(\,\cdot\,\vert\mathcal{I})$ may be specified parametrically or non-parametrically, constructed via a copula linking the marginal distributions of $G$ and $X$, or generated by a dynamic multivariate model conditional on $\mathcal{I}$, such as a vector autoregression model (VAR) capturing the joint dynamics of geopolitical risk indices and macro-financial variables, a state-space model, or a regime-switching process.

\section{Portfolio losses and capital under geopolitical stress scenarios}
\label{Section:Portfolio losses}

\subsection{Portfolio loss distribution under a geopolitical scenario}

This subsection develops a structural mapping from geopolitical and macro financial scenarios to the portfolio loss distribution, by modelling their joint impact on probabilities of default, losses given default, default dependence, and ultimately on tail portfolio losses through an analytically tractable factor model.

\newpage

\paragraph{Mapping scenarios and geopolitical risk to probabilities of default.} For each exposure $i$ we specify a functional relationship between the scenario vector and the stressed probability of default
\begin{equation}
  \text{PD}_i(s) = \text{PD}_i(g,x) = f_i(g,x),
\end{equation}
where $f_i$ is constrained to produce values in the open interval $(0,1)$. A parsimonious specification can be obtained by grouping exposures into sectors, indexed by $k = 1,\dots,K$, and by assuming that the sensitivity of the log odds of default to geopolitical risk and to macro financial variables is constant within each sector.

Let $k(i)$ denote the sector to which exposure $i$ belongs and let $\beta_{k(i)}$ and $\delta_{k(i)}$ be sector specific sensitivity vectors for macro financial and geopolitical shocks respectively. The logit specification is
\begin{equation}
  \log\left(\frac{\text{PD}_i(g,x)}{1-\text{PD}_i(g,x)}\right)
  =
  \log\left(\frac{\text{PD}_i^0}{1-\text{PD}_i^0}\right)
  + \beta_{k(i)}^{\top} x
  + \delta_{k(i)} g.
\end{equation}
Solving for $\text{PD}_i(g,x)$ yields
\begin{equation}
  \text{PD}_i(g,x)
  =
  \frac{\text{PD}_i^0
  \exp\big(\beta_{k(i)}^{\top} x + \delta_{k(i)} g\big)}
  {1 - \text{PD}_i^0 + \text{PD}_i^0 \exp\big(\beta_{k(i)}^{\top} x + \delta_{k(i)} g\big)}.
\end{equation}
Positive values of $\delta_{k(i)}$ reflect the idea that an increase in geopolitical risk raises the probability of default in sectors that are more exposed to global trade, supply chain disruptions or sanctions. 
Negative values could capture beneficiary sectors such as defence industries, although in a stress context it is natural to impose sign constraints on $\delta_{k(i)}$.

\vspace{-0.5cm}
\paragraph{Mapping scenarios and geopolitical risk to loss given default.}
The impact of geopolitical and macro financial shocks on loss given default is represented by
\begin{equation}
  \text{LGD}_i(s)=\text{LGD}_i(g,x) = g_i(g,x),
\end{equation}
where $g_i$ produces values in $(0,1)$. 
A convenient specification is an affine function in $g$ and $x$:
\begin{equation}
  \text{LGD}_i(g,x)
  =
  \text{LGD}_i^0 + \gamma_{k(i)}^{\top} x + \eta_{k(i)} g,
\end{equation}
where $\gamma_{k(i)}$ and $\eta_{k(i)}$ are sector specific sensitivity parameters, and $\text{LGD}_i(g,x)$ is truncated to the unit interval if necessary. 
The term $\eta_{k(i)} G$ captures the effect of geopolitical risk on recoveries through channels such as asset seizures, disruptions in cross border enforcement, collateral value losses or restrictions on the sale of pledged assets.

\vspace{-0.5cm}
\paragraph{Latent factor model for defaults.}
Conditional on the scenario $(g,x)$, default dependence among obligors is modelled through a one factor Gaussian latent variable structure, as in the standard ASRF/IRB framework \citep{vasicek2002distribution,Gordy2003}. 
For each exposure $i$ define the latent variable
\begin{equation}
  Y_i = \sqrt{\rho_i}\, Z + \sqrt{1 - \rho_i}\, \varepsilon_i,
\end{equation}
where $Z$ is a standard normal systematic factor, $\varepsilon_i$ is an idiosyncratic standard normal variable, and $\rho_i \in (0,1)$ is an asset correlation parameter. 
The random variables $Z$ and $\varepsilon_i$ are independent, and the $\varepsilon_i$ are independent across obligors. Default occurs if
\begin{equation}
  Y_i \leq \Phi^{-1}\big(\text{PD}_i(g,x)\big),
\end{equation}
where $\Phi(.)$ denotes the standard normal cumulative distribution function. 
The default indicator for exposure $i$ is then
\begin{equation}
  D_i(g,x) = \mathbf{1}\left\{Y_i \leq \Phi^{-1}\big(\text{PD}_i(g,x)\big)\right\}.
\end{equation}
The conditional loss on exposure $i$ under scenario $(g,x)$ is
\begin{equation}
  L_i(g,x) = \text{EAD}_i \, \text{LGD}_i(g,x)\, D_i(g,x),
\end{equation}
and the total portfolio loss is
\begin{equation}
  L(g,x) = \sum_{i=1}^n L_i(g,x).
\end{equation}

\vspace{-0.5cm}
\paragraph{Analytical approximation of tail losses.}
In the asymptotic homogeneous portfolio approximation, the distribution of portfolio losses conditional on $Z$ admits a tractable expression \citep{Gordy2003}. 
An analytical approximation for the quantile of the loss distribution at confidence level $q$ is
\begin{equation}
  L_q(g,x) \approx \sum_{i=1}^n \text{EAD}_i \,\text{LGD}_i(g,x)\,
  \Phi\left(
  \frac{\Phi^{-1}\big(\text{PD}_i(g,x)\big)
  + \sqrt{\rho_i}\,\Phi^{-1}(q)}
  {\sqrt{1-\rho_i}}
  \right).
\end{equation}
This expression links the geopolitical and macro financial scenario $(g,x)$ to a portfolio tail loss measure that can be compared with available capital.

\subsection{Capital and risk weighted assets under geopolitical stress}

We now translate the portfolio loss distribution derived above into stressed CET1 ratio, thereby linking geopolitical and macro financial scenarios to the bank’s capital adequacy condition.

\vspace{-0.5cm}
\paragraph{CET1 capital under stress.} Under a given scenario $(g,x)$, CET1 capital at the stress horizon $h$ is approximated by
\begin{equation}
    \label{eq:CET1}
  \text{CET1}(g,x) = \text{CET1}_0 - L_q(g,x) 
  + \Delta \text{P\&L}_{\text{non credit}}(g,x),
\end{equation}
where $\Delta \text{P\&L}_{\text{non credit}}(g,x)$ represents the contribution of non credit activities. 
Geopolitical risk can affect these additional terms through trading losses, fee income reductions or higher operating costs. In a pure credit portfolio analysis these terms can be set to zero or modelled through simple linear approximations. 

\vspace{-0.5cm}
\paragraph{Risk weighted assets under stress.}
Risk weighted assets may react to geopolitical scenarios through changes in probabilities of default and loss given default. At asset level, the risk weight of exposure $i$ under scenario $(g,x)$ is denoted by $\text{RW}_i(g,x)$ and can be written as
\begin{equation}
  \text{RW}_i(g,x) = h\big(\text{PD}_i(g,x), \text{LGD}_i(g,x), \text{M}_i\big),
\end{equation}
where $M_i$ is the effective maturity and the function $h(.)$ is defined by
\begin{equation}
  h\big(\text{PD}, \text{LGD}, \text{M}\big)
  =
  \text{LGD} \times 
  \left[
  \Phi\left(
  \frac{\Phi^{-1}\big(\text{PD}\big)
  + \sqrt{\rho}\,\Phi^{-1}(q)}
  {\sqrt{1-\rho}}
  \right)
  - \text{PD}
  \right] \times \gamma(\text{M}),
\end{equation}
with $\rho$ and $q$ denoting, respectively, the asset correlation parameter and the confidence level used to compute the loss quantile, and where $\gamma(\text{M})$ represents the maturity adjustment \citep{BCBS_CRE31}. The total risk weighted assets are
\begin{equation}
  \text{RWA}(g,x) = \sum_{i=1}^n \text{EAD}_i \, \text{RW}_i(g,x).
\end{equation}
This relationship can be replaced by a linear approximation
\begin{equation}
    \label{eq:RWA}
  \text{RWA}(g,x) = \text{RWA}_0 + \sum_{i=1}^n \alpha_i \big(\text{PD}_i(g,x)-\text{PD}_i^0\big),
\end{equation}
where the coefficients $\alpha_i$ are calibrated to internal capital model outputs and historical observations. 
A positive level of geopolitical risk $G$ then contributes to higher risk weights through its impact on probabilities of default.

\vspace{-0.5cm}
\paragraph{CET1 ratio under stress.}
The CET1 ratio under a geopolitical scenario $s=(g,x)^{\top}$ is
\begin{equation}
  R(s) = \frac{\text{CET1}(g,x)}{\text{RWA}(g,x)},
\end{equation}
where $\text{CET1}(g,x)$ and $\text{RWA}(g,x)$ are defined in
equations \eqref{eq:CET1} and \eqref{eq:RWA}, respectively.
The bank is considered to experience a capital breakdown whenever the
stressed CET1 ratio falls below the breakdown threshold $R^{\star}$, that is,
\begin{equation}
  R(s) \leq R^{\star}.
\end{equation}

\section{Reverse stress testing as constrained optimisation}
\label{sec:rst_optim}

\subsection{Geopolitical point reverse stress test}
\label{sec:point_RST}

In its press release of 12 December 2025, the ECB states that ''\textit{in a reverse stress test, a pre-determined outcome is prescribed and each bank defines the scenario in which that outcome would materialise}'' \citep{ECB_SSM_PR_2025}. Consistent with this principle, we formulate the reverse stress testing problem as the identification of a geopolitical and macro-financial scenario $s$ for which the stressed CET1 ratio satisfies $R(s) \leq R^{\star}$ and that is as plausible as possible under the joint distribution for geopolitical and macro-financial risks.

Let $f(s \vert \mathcal{I})$ denote the conditional probability density function associated with the distribution $P_S(\,\cdot\,\vert\mathcal{I})$ of the scenario vector $S$ at horizon $h$, where $P_S(\,\cdot\,\vert\mathcal{I})$ is defined conditional on the information set $\mathcal{I}$ available at the initial date at which the reverse stress test is conducted. A point reverse stress test can then be formulated as a constrained maximum likelihood problem over the capital-breaching region 
\citep{Glasserman2015,Pesenti2019}. 

\begin{definition}[\textbf{Geopolitical point reverse stress test}]
    \label{Def: Geopolitical point RST}
    A geopolitical point reverse stress test consists of solving
    \begin{equation}
      s^{\star}
      \in
      \argmax_{s \in \mathcal{S}}
      f(s \vert \mathcal{I})
      \quad \text{s.t.} \quad
      R(s) \leq R^{\star},\; g > 0.
    \end{equation}
    The solution $s^\star$, referred to as the point reverse stress scenario or design point, represents the most likely joint geopolitical and macro-financial configuration under the reference distribution for which the CET1 ratio falls below the prescribed breakdown threshold $R^{\star}$. 
\end{definition}
\noindent The qualifier \emph{point} emphasises that the optimisation problem admits a unique solution. The term \emph{design point} is borrowed from the structural reliability literature, where it denotes the most probable point (MPP) of failure, defined as the point on the failure boundary that minimises the distance to the origin in the space of standardised inputs. The geopolitical point reverse stress test introduced in this paper is the direct analogue of this concept, with the capital breakdown constraint playing the role of the failure domain.

For convenience, we define the \emph{reverse stress breach scenario set}, denoted by $\mathcal{S}_{\mathrm{red}}$, as the set of scenarios that breach the capital constraint (the \emph{``red zone''}), with 
\begin{equation}
  \mathcal{S}_{\mathrm{red}}
  =
  \left\{
    s \in \mathcal{S} : R(s) \le R^{\star}
  \right\}.
\end{equation}

The condition $g > 0$ in the constrained optimisation problem imposes a deterioration in geopolitical conditions relative to the baseline. The optimisation problem can further be augmented with additional feasibility constraints, such as bounds on the admissible scenario components,
\begin{equation}
  g_{\min} \leq g \leq g_{\max},
  \qquad
  x_{\min} \leq x \leq x_{\max},
\end{equation}
as well as monotonicity constraints ruling out improvements in credit quality under stress, for instance
\begin{equation}
  \text{PD}_i(s) \geq \text{PD}_i^{0},
  \qquad
  \text{LGD}_i(s) \geq \text{LGD}_i^{0},
  \qquad i = 1,\dots,n.
\end{equation}

Under mild regularity conditions, most notably continuity of $R(\cdot)$ and the existence of at least one capital-breaching scenario, the capital constraint is binding at the optimum (see \citealp[Sec.~5.5]{boyd2004convex}). Indeed, suppose that a scenario $s$ satisfies $R(s) < R^{\star}$. Then, for any $\alpha \in (0,1)$, the scaled scenario $\alpha s$ is more plausible under an elliptical reference model and, by continuity of $R(\cdot)$, remains capital-breaching for $\alpha$ sufficiently close to $1$. As a result, an interior feasible point cannot be optimal.

It follows that the solution $s^{\star} = (g^{\star}, x^{\star})$ lies on the breakdown frontier
\begin{equation}
\partial \mathcal{S}_{\mathrm{red}} = \{ s \in \mathcal{S} : R(s) = R^{\star} \}.
\end{equation}
The optimal scenario is therefore not an arbitrary point on the boundary, but the one that maximises the likelihood under the reference distribution among all scenarios that exactly exhaust the capital constraint.

\paragraph{Normality assumption.}
Definition~\ref{Def: Geopolitical point RST} can be operationalised under any reference distribution for the scenario vector $S=(G,X)^{\top}$; see, for instance, Appendix~\ref{Appendix:student_reference} for the case of a Student distribution. In what follows, we assume a multivariate normal reference distribution,
\begin{equation}
  S \sim \mathcal{N}(0,\Sigma),
\end{equation}
where $\Sigma$ denotes the conditional covariance matrix of joint geopolitical and macro-financial shocks. Since the Gaussian density admits the representation
\begin{equation}
  f(s\vert\mathcal{I})
  \propto
  \exp\!\left(-\tfrac{1}{2} s^{\top}\Sigma^{-1}s\right),
\end{equation}
the likelihood-based formulation of the point reverse stress test is equivalent to the minimisation of a quadratic form in the scenario realisation $s$.

\begin{definition}[\textbf{Geopolitical point reverse stress test under normality}]
    \label{Def: point RST Normality}
    Under the normal reference model, the geopolitical point reverse stress scenario is defined as the solution of
    \begin{equation}
      s^{\star} \in \argmin_{s \in \mathcal{S}_{\mathrm{red}},\, g>0} \frac{1}{2} d_{\Sigma}^2(s).
    \end{equation}
    where $d_{\Sigma}^{2}(s)=s^{\top}\Sigma^{-1}s$ denotes the squared Mahalanobis distance associated with the covariance matrix $\Sigma$.
\end{definition}
\noindent Under the normality assumption, scenario plausibility is fully characterised by the squared Mahalanobis distance to the baseline, which provides a scale-free measure of the extremeness of the optimal scenario $s^{\star}$ under the reference model. 

\vspace{-0.5cm}
\paragraph{Plausibility score and chi-squared calibration.} In the Gaussian case, the squared Mahalanobis distance follows a $\chi^{2}_{d}$ distribution,
\begin{equation}
  d_{\Sigma}^{2}(S) \sim \chi^{2}_{d}.
\end{equation}
As a result, a natural plausibility measure of the scenario $s^\star$ is given by the tail probability
\begin{equation}
  \mathbb{P}\!\left( d_{\Sigma}^{2}(S) \ge d_{\Sigma}^{2}(s^{\star}) \right)
  =
  1 - F_{\chi^{2}_{d}}\!\left( d_{\Sigma}^{2}(s^{\star}) \right),
\end{equation}
where $\mathbb{P}(\cdot) := P_S(\cdot \mid \mathcal{I})$ and $F_{\chi^{2}_{d}}$ denotes the cumulative distribution function of the $\chi^{2}_{d}$ distribution. This quantity can be interpreted as a p-value--like measure: it is the probability of observing a joint shock at least as extreme as $s^{\star}$ in the Mahalanobis metric. Because this measure is invariant to linear rescalings of the risk drivers, it offers a coherent basis for comparing scenarios across portfolios and model specifications, and for benchmarking their extremeness against historical episodes associated with major geopolitical disruptions.

\newpage
\subsection{Multiple plausible scenarios}
\label{sec:multiple_scenarios}

A single design point $s^\star$ provides a clear and interpretable anchor, but stress testing is foremost
a governance tool. Supervisory principles therefore emphasise that an effective stress-testing programme
should span a \emph{range} of scenarios and severity levels to
reduce the risk of overlooking relevant vulnerabilities \citep{BCBS2018StressTesting}. Consistent with
this perspective, the ECB notes that ''\textit{multiple stress scenarios \dots\ can expose vulnerabilities that would not be detected if a single scenario approach were taken}'' \citep[p.~10]{ECB2025StressTestReport}.
Accordingly, we complement the point reverse stress test with \emph{sets} of plausible capital-breaching scenarios.

We consider two complementary constructions. The first is a \emph{local neighbourhood of reverse stress scenarios} around the point reverse stress scenario $s^\star$, which characterises scenario uncertainty in the immediate vicinity of the design point. The second is a \emph{set of plausible reverse stress scenarios}, defined as a near-optimal relaxation of the likelihood maximisation programme. This latter construction captures alternative capital-breaching narratives that remain almost as plausible as $s^\star$, and is particularly relevant when the breakdown frontier is non-convex. 
An operational procedure to map these set-valued outcomes into a finite, reportable scenario list is presented in Section~\ref{subsec:scenario_list}.

\vspace{-0.5cm}
\paragraph{Local neighbourhood of reverse stress scenarios.} 
To explore perturbations \emph{around} the point reverse stress scenario $s^\star$, we introduce a local neighbourhood
that captures nearby capital-breaching scenarios in a model-consistent notion of proximity. 

\begin{definition}[\textbf{Local neighbourhood of reverse stress scenarios}]
\label{def:local_neighbourhood}
    Let $\delta:\mathcal{S}\times\mathcal{S}\to\mathbb{R}_+$ be a distance function capturing a model-consistent notion of proximity between scenarios. For $\eta>0$, define the $\delta$-ball around $s^\star$ as
    \begin{equation}
      \mathcal{B}^{\delta}_{\eta}(s^\star)
      :=
      \big\{\, s\in\mathcal{S}:\ \delta(s,s^\star)\le \eta \big\},
    \end{equation}
    and define the associated local neighbourhood of reverse stress scenarios as
    \begin{equation}
      \mathcal{S}^{\delta}_{\eta}
      :=
      \mathcal{S}_{\mathrm{red}}\cap \mathcal{B}^{\delta}_{\eta}(s^\star).
    \end{equation}
\end{definition}
\noindent When there is no ambiguity on the choice of $\delta$, we simply write $\mathcal{B}_{\eta}(s^\star)$ and $\mathcal{S}_{\eta}$ for $\mathcal{B}^{\delta}_{\eta}(s^\star)$ and $\mathcal{S}^{\delta}_{\eta}$, respectively. A convenient way to construct such a $\delta$ is to consider a norm $\|\cdot\|$ or a semi-norm $p$, on $\mathbb{R}^d$ and set $\delta(s,t)=\|s-t\|$ (or $\delta(s,t)=p(s-t)$).\footnote{A semi-norm $p:\mathbb{R}^d\to\mathbb{R}_+$ is positively homogeneous and subadditive, i.e. $p(\alpha u)=|\alpha|\,p(u)$ and $p(u+v)\le p(u)+p(v)$, but unlike a norm it may satisfy $p(u)=0$ for some $u\neq 0$ (e.g.\ $p(u)=\|Au\|_2$ when $A$ is not full rank).} 

Under the Gaussian reference model, proximity can naturally be measured using the Mahalanobis geometry induced by $\Sigma$, leading to the Mahalanobis ball illustrated in Figure~\ref{fig:rst_geometry}.

\begin{proposition}[\textbf{Local neighbourhood of scenarios under normality}]
\label{prop:local_ball_normality}
    Assume that $S\sim\mathcal{N}(0,\Sigma)$ and let $d_\Sigma^2(s)=s^\top\Sigma^{-1}s$. Consider the choice
    $\delta(s,s^\star)=d_\Sigma(s-s^\star)$. Then
    \begin{equation}
      \mathcal{B}^{\delta}_{\eta}(s^\star)
      =
      \big\{\, s\in\mathcal{S}:\ d_\Sigma^2(s-s^\star)\le \eta \big\},
      \qquad
      \mathcal{S}^{\delta}_{\eta}
      =
      \mathcal{S}_{\mathrm{red}}\cap
      \big\{\, s:\ d_\Sigma^2(s-s^\star)\le \eta \big\}.
    \end{equation}
\end{proposition}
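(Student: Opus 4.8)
The plan is to treat this proposition as an unwinding of Definition~\ref{def:local_neighbourhood} under the explicit choice $\delta(s,s^\star)=d_\Sigma(s-s^\star)$, preceded by a check that this $\delta$ is an admissible distance function. The argument has three steps: (i) verify that $d_\Sigma$ is a genuine norm, so that the norm-based construction $\delta(s,t)=\|s-t\|$ introduced just after Definition~\ref{def:local_neighbourhood} applies verbatim; (ii) substitute $\delta$ into the definition of the $\delta$-ball; and (iii) rewrite the resulting set in the squared Mahalanobis form. The expression for $\mathcal{S}^{\delta}_{\eta}$ then follows immediately by intersecting the ball with $\mathcal{S}_{\mathrm{red}}$.

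First I would record that $\Sigma$, being the conditional covariance matrix of a non-degenerate Gaussian vector $S\sim\mathcal{N}(0,\Sigma)$, is symmetric positive definite, hence so is $\Sigma^{-1}$. Consequently $(u,v)\mapsto u^{\top}\Sigma^{-1}v$ is an inner product on $\mathbb{R}^{d}$, and $d_\Sigma(u)=\sqrt{u^{\top}\Sigma^{-1}u}$ is the norm it induces. Setting $\delta(s,t)=d_\Sigma(s-t)=\|s-t\|_{\Sigma}$ therefore instantiates exactly the norm-induced distance of the accompanying construction, so Definition~\ref{def:local_neighbourhood} applies with no further conditions. Substituting into $\mathcal{B}^{\delta}_{\eta}(s^\star)=\{s\in\mathcal{S}:\delta(s,s^\star)\le\eta\}$ gives $\mathcal{B}^{\delta}_{\eta}(s^\star)=\{s\in\mathcal{S}:d_\Sigma(s-s^\star)\le\eta\}$.

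The last step converts this to the squared form stated in the proposition using the strict monotonicity of $t\mapsto t^{2}$ on $\mathbb{R}_{+}$: since $d_\Sigma(s-s^\star)\ge 0$ and $\eta>0$, the condition $d_\Sigma(s-s^\star)\le\eta$ is equivalent to $d_\Sigma^{2}(s-s^\star)\le\eta^{2}$. The single point requiring care — and the only genuine obstacle, which is notational rather than mathematical — is the squared-versus-unsquared radius convention: writing the ball as $\{s:d_\Sigma^{2}(s-s^\star)\le\eta\}$ rather than $\{s:d_\Sigma^{2}(s-s^\star)\le\eta^{2}\}$ amounts to a harmless reparametrisation of the radius, replacing $\eta$ by $\eta^{2}$. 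Since $\eta$ ranges over all of $\mathbb{R}_{+}$ and squaring is a bijection there, the two conventions generate the same family of neighbourhoods, and I would simply flag this identification when stating the result. Finally, $\mathcal{S}^{\delta}_{\eta}=\mathcal{S}_{\mathrm{red}}\cap\mathcal{B}^{\delta}_{\eta}(s^\star)$ holds by definition, which yields the second displayed equality directly.
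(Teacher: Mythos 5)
Your proof is correct and follows essentially the same route as the paper's, whose entire argument is one sentence: direct substitution of $\delta(s,s^\star)=d_\Sigma(s-s^\star)$ into Definition~\ref{def:local_neighbourhood}. You go beyond the paper in two respects: you verify that $d_\Sigma$ is a genuine norm (via positive definiteness of $\Sigma^{-1}$), and you flag the squared-versus-unsquared radius convention, which the paper's proof silently glosses over. That second point is a real imprecision in the statement: literal substitution gives $\mathcal{B}^{\delta}_{\eta}(s^\star)=\{s\in\mathcal{S}:d_\Sigma(s-s^\star)\le\eta\}=\{s\in\mathcal{S}:d_\Sigma^2(s-s^\star)\le\eta^2\}$, so the displayed set with threshold $\eta$ on the \emph{squared} distance coincides with it only under the reparametrisation $\eta\mapsto\eta^2$ (equivalently, by reading the radius as expressed in squared-Mahalanobis units, which is the convention the paper actually uses downstream, e.g.\ in Figure~\ref{fig:rst_geometry} and in the acceptance test of Section~\ref{subsec:scenario_list}). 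Your observation that squaring is a bijection on $\mathbb{R}_+$, so the two conventions generate the same family of neighbourhoods, is exactly the right way to reconcile the statement with the definition.
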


\begin{proof}
The result follows by direct substitution of $\delta(s,s^\star)=d_\Sigma(s-s^\star)$ into the definitions of
$\mathcal{B}^{\delta}_{\eta}(s^\star)$ and $\mathcal{S}^{\delta}_{\eta}$.
\end{proof}

Figure~\ref{fig:rst_geometry} provides a geometric interpretation of the local neighbourhood of reverse stress scenarios in the $(g,x)$ plane, in the case of a single macroeconomic risk factor ($d=2$). Elliptic contours represent levels of equal plausibility under the reference distribution, while the solid curve denotes the breakdown frontier
$\partial \mathcal{S}_{\mathrm{red}}= \left\{s \in \mathcal{S} : R(s) = R^{\star}\right\}$. The benchmark reverse stress scenario $s^\star$ corresponds to the tangency point between the smallest plausibility contour and the frontier, making it the most plausible capital-breaching configuration. The local neighbourhood $\mathcal{S}_\eta$ is obtained by intersecting the reverse stress event set with a plausibility ball centred at $s^\star$, and collects nearby reverse stress scenarios corresponding to small perturbations of the point reverse stress scenario.
\vspace{0.5cm}

\begin{figure}[!htbp]
  \centering
  \includegraphics[width=0.85\linewidth]{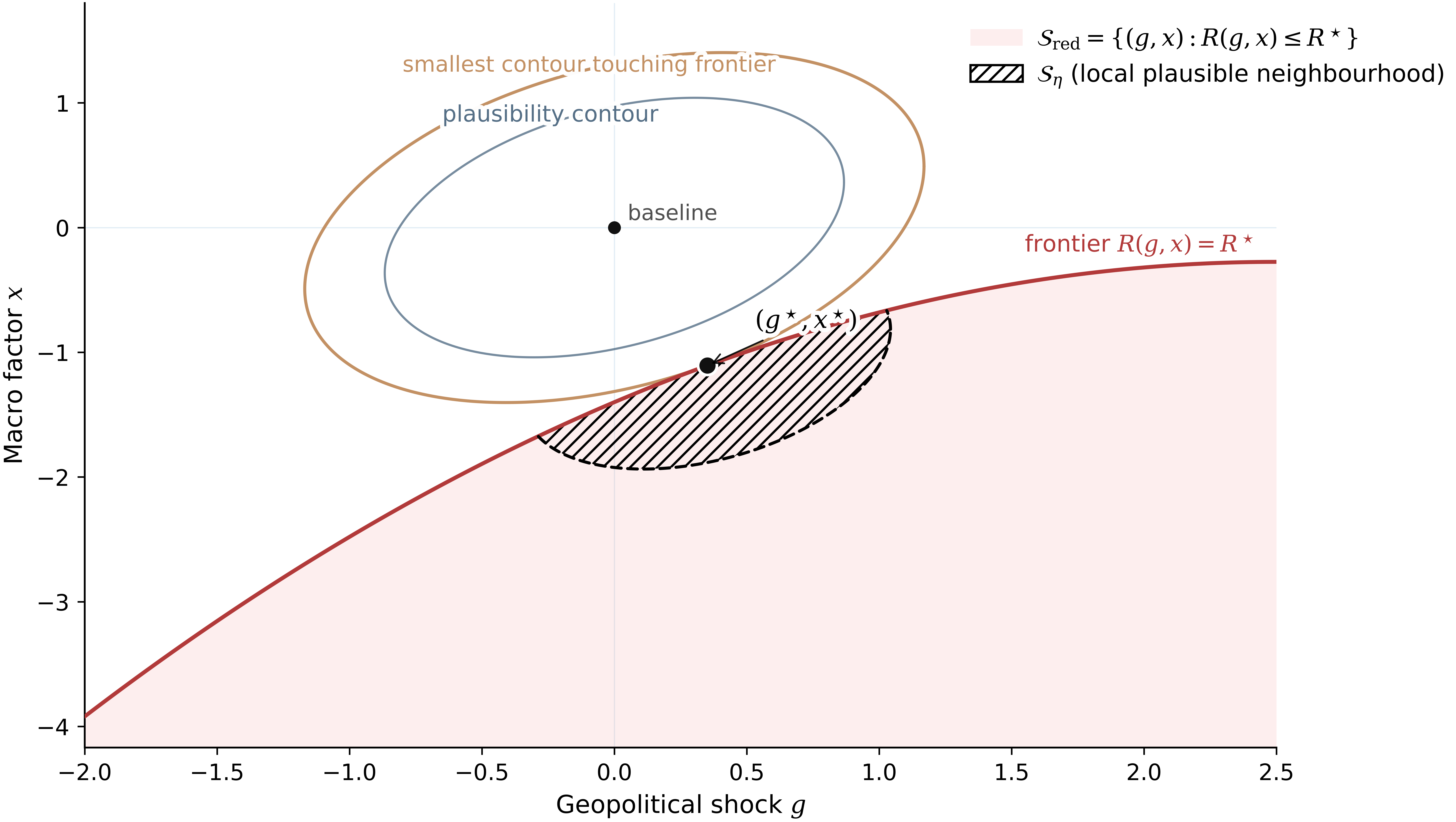}
  \caption{Geometry of the reverse-stress problem in the $(g,x)$ plane. The shaded region is the reverse stress breach scenario set (red zone) $\mathcal{S}_{\mathrm{red}}=\{(g,x):R(g,x)\le R^\star\}$ and the solid curve is the breakdown frontier $R(g,x)=R^\star$. Elliptic contours centred at $s=0$ are level sets of $d_\Sigma^{2}(s)=s^\top\Sigma^{-1}s$ under the reference distribution. The point reverse stress test scenario $s^\star=(g^\star,x^\star)$ is the tangency point between the frontier and the smallest plausibility contour, obtained under the constraint $g\ge0$. The dashed ellipsoid represents the local ball $\mathcal{B}_\eta(s^\star)$ and the local neighbourhood of reverse stress scenarios $\mathcal{S}_\eta=\mathcal{S}_{\mathrm{red}}\cap\mathcal{B}_\eta(s^\star)$.}
  \label{fig:rst_geometry}
\end{figure}

\vspace{-0.5cm}
\paragraph{Plausible reverse stress scenario set.}
The local neighbourhood $\mathcal{S}_\eta$ is designed to explore perturbations \emph{around} the benchmark scenario
$s^\star$. When the breakdown frontier is non-convex, however, alternative crisis narratives may exist that are distant
from $s^\star$ in the state space while remaining almost as plausible under the reference distribution. 

To capture such alternatives, we introduce a \emph{plausible reverse stress scenario set} defined as an $\varepsilon$-relaxation of the likelihood maximisation problem in Definition~\ref{Def: Geopolitical point RST}. This construction collects all capital-breaching scenarios whose plausibility under the reference distribution lies within an
$\varepsilon$-neighbourhood of the benchmark design point.

\newpage
\begin{definition}[\textbf{Plausible reverse stress scenario set}]
    \label{def:set_rst_general}
    Let $s^\star$ denote the solution to the geopolitical point reverse stress test defined in
    Definition~\ref{Def: Geopolitical point RST}. For a tolerance level $\varepsilon>0$, define the
    $\varepsilon$-near-optimal reverse stress scenario set as
    \begin{equation}
      \mathcal{N}_\varepsilon
      :=
      \Big\{\, s\in \mathcal{S} \;:\; R(s)\le R^\star,\; g>0,\;
      -\log f(s\mid\mathcal{I}) \le -\log f(s^\star\mid\mathcal{I}) + \varepsilon/2 \Big\}.
      \label{eq:N_eps_def}
    \end{equation}
\end{definition}
\noindent Definition~\ref{def:set_rst_general} collects all capital-breaching scenarios whose negative log-likelihood remains within $\varepsilon/2$ of the optimum. Equivalently, $\mathcal{N}_\varepsilon$ can be written as the set of plausible geopolitical reverse stress scenarios whose likelihood ratio relative to the design point exceeds $e^{-\varepsilon/2}$, that is, $f(s\mid\mathcal{I}) \ge e^{-\varepsilon/2} f(s^\star\mid\mathcal{I})$. This relaxation is consistent with the conceptual goals of reverse stress testing emphasised by \citet{Kopeliovich2015}: scenarios should be (relatively) likely, sufficiently
distinct, and should not exclude severe outcomes.

Under the Gaussian reference model, $-\log f(s\mid\mathcal{I})$ is proportional to the squared Mahalanobis  distance $d_\Sigma^2(s)$, and Definition~\ref{def:set_rst_general} admits a simple geometric characterisation.

\begin{proposition}[\textbf{Plausible reverse stress scenario set under normality}]
    \label{prop:near_optimal_normality}
    Assume that the reference model for $S$ is Gaussian, $S \sim \mathcal{N}(0,\Sigma)$. Then the set of plausible reverse stress scenarios admits the following equivalent geometric characterisation:
    \begin{equation}
      \mathcal{N}_\varepsilon
      =
      \Big\{\, s \in \mathcal{S} : R(s) \le R^\star,\; g > 0,\;
      d_\Sigma^2(s) \le d_\Sigma^2(s^\star) + \varepsilon \Big\}.
    \end{equation}
\end{proposition}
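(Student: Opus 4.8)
The plan is to show that the two defining conditions of $\mathcal{N}_\varepsilon$ in Definition~\ref{def:set_rst_general} coincide with those in the proposition once the Gaussian density is substituted. The capital-breaching constraints $R(s)\le R^\star$ and $g>0$ are identical in both characterisations, so the entire content of the proof reduces to translating the log-likelihood inequality $-\log f(s\mid\mathcal{I}) \le -\log f(s^\star\mid\mathcal{I}) + \varepsilon/2$ into the Mahalanobis inequality $d_\Sigma^2(s)\le d_\Sigma^2(s^\star)+\varepsilon$.

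First I would write out the Gaussian log-density. Since $S\sim\mathcal{N}(0,\Sigma)$, the density satisfies
\begin{equation}
  f(s\mid\mathcal{I})
  =
  (2\pi)^{-d/2}(\det\Sigma)^{-1/2}
  \exp\!\left(-\tfrac{1}{2} s^{\top}\Sigma^{-1}s\right),
\end{equation}
so that, writing $c := \tfrac{d}{2}\log(2\pi) + \tfrac{1}{2}\log\det\Sigma$ for the normalising constant, the negative log-density is
\begin{equation}
  -\log f(s\mid\mathcal{I})
  =
  \tfrac{1}{2}\, d_\Sigma^2(s) + c,
\end{equation}
where $d_\Sigma^2(s)=s^\top\Sigma^{-1}s$ as in Definition~\ref{Def: point RST Normality}.

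Next I would substitute this expression for both $s$ and $s^\star$ into the log-likelihood constraint. The additive constant $c$ appears on both sides and cancels, giving
\begin{equation}
  \tfrac{1}{2}\, d_\Sigma^2(s)
  \le
  \tfrac{1}{2}\, d_\Sigma^2(s^\star) + \tfrac{\varepsilon}{2}.
\end{equation}
Multiplying through by $2$ yields exactly $d_\Sigma^2(s)\le d_\Sigma^2(s^\star)+\varepsilon$. Since every step is an equivalence, the set defined by the log-likelihood condition (together with the unchanged capital and positivity constraints) is identical to the set in the proposition, which completes the argument.

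This is essentially a one-line substitution, so I do not anticipate a genuine obstacle; the only point requiring any care is confirming that the normalising constant $c$ is independent of $s$ and therefore cancels cleanly between the two sides, which relies solely on $\Sigma$ being fixed under the reference model. I would state this explicitly so that the cancellation is transparent, and note that the same reasoning underlies the Mahalanobis reformulation of the point reverse stress test in Definition~\ref{Def: point RST Normality}.
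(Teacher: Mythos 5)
Your proposal is correct and follows essentially the same route as the paper's own proof: both substitute the Gaussian density into the log-likelihood condition of Definition~\ref{def:set_rst_general}, observe that the normalising constant (your explicit $c$, the paper's $-\log C$) cancels between the two sides, and rescale by $2$ to obtain $d_\Sigma^2(s)\le d_\Sigma^2(s^\star)+\varepsilon$, with the constraints $R(s)\le R^\star$ and $g>0$ carried over unchanged. The only difference is cosmetic: you spell out the constant as $\tfrac{d}{2}\log(2\pi)+\tfrac{1}{2}\log\det\Sigma$, whereas the paper leaves it as an unspecified $C>0$.
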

\begin{proof}
    Under the Gaussian reference model $S\sim\mathcal{N}(0,\Sigma)$, there exists a constant $C>0$ such that
    $f(s\mid\mathcal{I}) = C \exp\!\big(-\tfrac{1}{2} d_\Sigma^2(s)\big)$.
    Taking minus logarithms yields
    $-\log f(s\mid\mathcal{I}) = \tfrac{1}{2} d_\Sigma^2(s) - \log C$.
    Therefore,
    \begin{equation*}
      -\log f(s\mid\mathcal{I}) \le -\log f(s^\star\mid\mathcal{I}) + \tfrac{\varepsilon}{2}
      \iff
      \tfrac{1}{2}\!\left(d_\Sigma^2(s)-d_\Sigma^2(s^\star)\right)\le \tfrac{\varepsilon}{2}
      \iff
      d_\Sigma^2(s)\le d_\Sigma^2(s^\star)+\varepsilon,
    \end{equation*}
    which proves the claim.
\end{proof}

\noindent Figure~\ref{fig:rst_near_optimal} provides a geometric interpretation of the
plausible reverse stress scenario set $\mathcal{N}_\varepsilon$ in the $(g,x)$ plane, again in the
case of a single macroeconomic risk factor ($d=2$). As in Figure~\ref{fig:rst_geometry}, elliptic contours
represent levels of equal plausibility under the reference distribution, while the solid curve denotes the
breakdown frontier $\partial \mathcal{S}_{\mathrm{red}}=\{s\in\mathcal{S}:R(s)=R^\star\}$. The 
point reverse stress scenario $s^\star$ corresponds to the tangency point between the smallest plausibility contour
and the frontier. The outer contour represents the relaxed plausibility level
$d_\Sigma^2(s)=d_\Sigma^2(s^\star)+\varepsilon$. The hatched region $\mathcal{N}_\varepsilon$, obtained by
intersecting this plausibility sublevel set with the reverse stress breach scenario set, illustrates that
near-optimal capital-breaching scenarios may arise at locations that are distant from $s^\star$ in the state
space when the breakdown frontier is non-convex.

\begin{figure}[!htbp]
  \centering
  \includegraphics[width=0.85\linewidth]{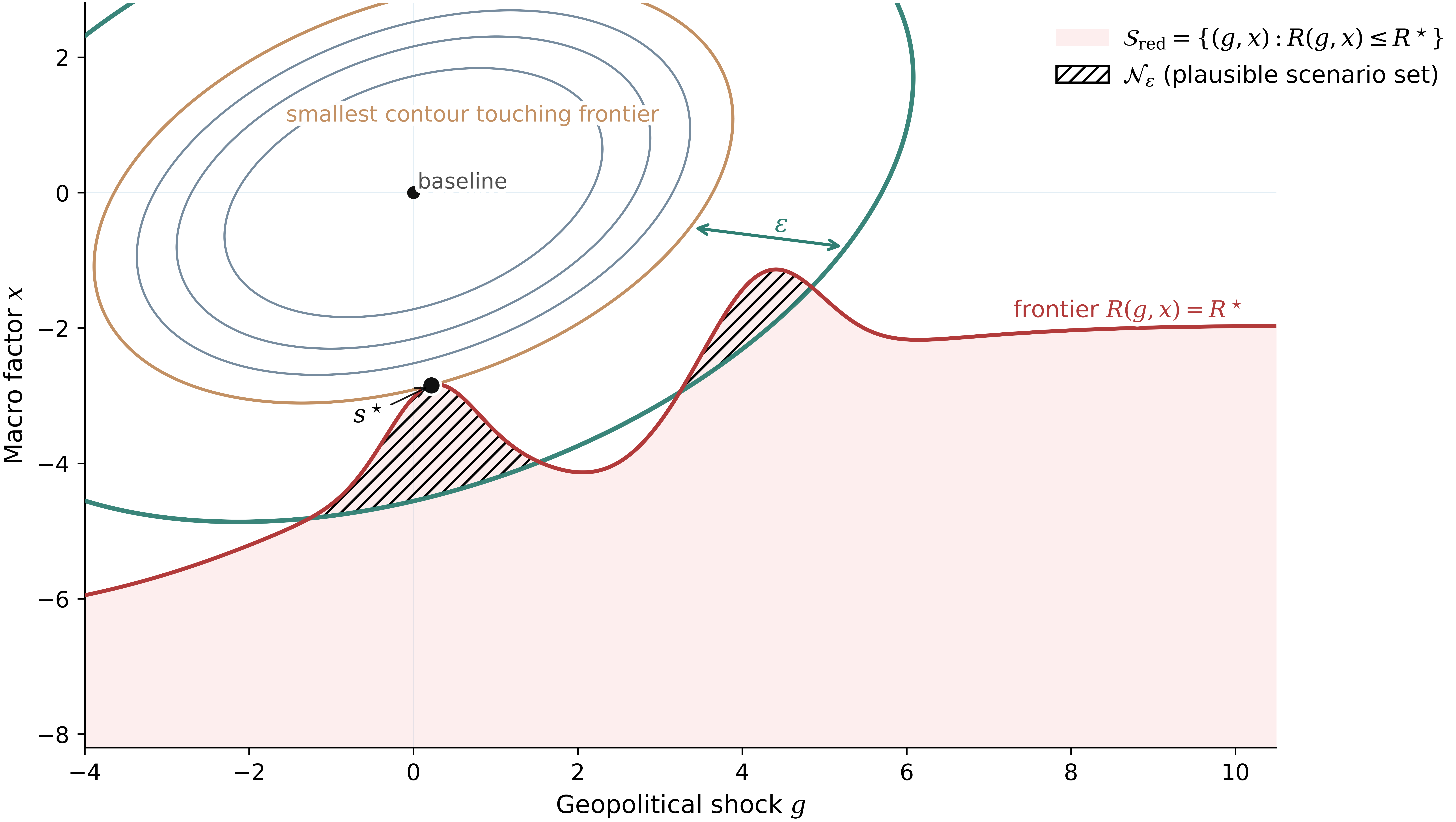}
  \caption{Geometry of the plausible reverse stress scenario set $\mathcal{N}_\varepsilon$ in the $(g,x)$ plane. The shaded region is the reverse stress breach scenario set (red zone) $\mathcal{S}_{\mathrm{red}}=\{(g,x):R(g,x)\le R^\star\}$ and the solid curve is the breakdown frontier $R(g,x)=R^\star$. Elliptic contours centred at $s=0$ are level sets of $d_\Sigma^{2}(s)=s^\top\Sigma^{-1}s$. The point $s^\star=(g^\star,x^\star)$ is the design point. The outer contour corresponds to $d_\Sigma^{2}(s)=d_\Sigma^{2}(s^\star)+\varepsilon$; the hatched region $\mathcal{N}_\varepsilon=\mathcal{S}_{\mathrm{red}}\cap\{s:\ d_\Sigma^2(s)\le d_\Sigma^2(s^\star)+\varepsilon\}$
  collects all capital-breaching scenarios within an $\varepsilon$ margin of the optimal log-likelihood level.}
  \label{fig:rst_near_optimal}
\end{figure}

\vspace{-0.5cm}
\paragraph{Plausibility score.}
Having established the plausibility score for the point reverse stress scenario, the same metric applies unchanged to the set-valued constructions considered here. The chi-squared calibration introduced in Section~\ref{sec:point_RST} applies unchanged to the set-valued constructions considered here. Accordingly, under normality assumption, any scenario $s$, in particular any $s\in\mathcal{S}_\eta$ or $s\in\mathcal{N}_\varepsilon$, is assigned the same tail-probability plausibility score
\begin{equation}
  \mathbb{P}\!\left(d_\Sigma^{2}(S)\ge d_\Sigma^{2}(s)\right)
  =
  1 - F_{\chi^{2}_{d}}\!\left(d_\Sigma^{2}(s)\right),
\end{equation}

The distinction between $\mathcal{S}_\eta$ and $\mathcal{N}_\varepsilon$ therefore lies solely in their membership constraints rather than in the plausibility metric itself. Membership in $\mathcal{S}_\eta$ additionally requires
$\delta(s,s^\star)\le\eta$ (or $d_\Sigma^2(s-s^\star)\le\eta$ under normality), reflecting proximity to the benchmark design point, whereas membership in $\mathcal{N}_\varepsilon$ requires
$d_\Sigma^2(s)\le d_\Sigma^2(s^\star)+\varepsilon$ (equivalently, $-\log f(s | \mathcal{I})\le -\log f(s^\star|\mathcal{I})+\varepsilon/2$),
capturing near-optimal plausibility irrespective of geometric distance from $s^\star$.

\subsection{Interpretation of the geopolitical reverse stress scenario}
\label{sec:interpretation}

Let $s^{\star}=(g^{\star},x^{\star})$ denote the solution to the geopolitical point reverse stress test. Evaluating the credit-risk transmission mappings at $s^{\star}$ yields the stressed exposure-level credit-risk parameters
\begin{equation}
  \text{PD}_i^{\star}=\text{PD}_i(g^{\star},x^{\star}),
  \qquad
  \text{LGD}_i^{\star}=\text{LGD}_i(g^{\star},x^{\star}),
  \qquad i=1,\dots,n,
\end{equation}
as well as the corresponding values of the portfolio loss quantile and the CET1 ratio,
\begin{equation}
  L_q^{\star}=L_q(g^{\star},x^{\star}),
  \qquad
  R(g^{\star},x^{\star})=R^{\star},
\end{equation}
where the latter equality holds under standard regularity conditions ensuring that the capital
constraint binds at the optimum.

The scalar $g^{\star}$ represents the geopolitical coordinate of the most probable configuration, under the reference joint distribution, that brings the portfolio to the boundary of the reverse stress breach scenario set. Conditional on this geopolitical stress level, the macro-financial vector $x^{\star}$ is selected endogenously as the most plausible companion shock, given the dependence structure given by $P_S(\,\cdot\,\vert\mathcal{I})$, that suffices to trigger the prescribed capital depletion. Accordingly, the benchmark scenario $s^{\star}$ should not be interpreted as a worst-case outcome. Rather, it corresponds to the most credible joint geopolitical and macro-financial realisation, relative to the baseline and historical variability, that breaches the capital threshold $R^\star$.

The economic interpretation of the reverse stress scenario follows directly from the stressed exposure-level credit-risk parameters. These exposure-level effects can be aggregated along relevant dimensions, such as economic sectors. Let $k(i)\in\{1,\dots,K\}$ denote the sector to which exposure $i$ belongs, and define the sector-level exposure weights by
\begin{equation}
  \omega_{i\mid k}
  :=
  \frac{\mathrm{EAD}_i}{\sum_{j:\,k(j)=k} \mathrm{EAD}_j},
  \qquad i:\,k(i)=k.
\end{equation}
Sector-level probabilities of default and losses given default are then defined as
exposure-weighted averages of the corresponding exposure-level quantities,
\begin{equation}
  \text{PD}_k(g,x)
  :=
  \sum_{i:\,k(i)=k} \omega_{i\mid k}\,\text{PD}_i(g,x),
  \qquad
  \text{LGD}_k(g,x)
  :=
  \sum_{i:\,k(i)=k} \omega_{i\mid k}\,\text{LGD}_i(g,x).
\end{equation}
Evaluating the sector-level aggregation mappings at the point reverse stress scenario $s^{\star}=(g^{\star},x^{\star})$ yields the stressed sectoral credit-risk parameters
\begin{equation}
  \text{PD}_k^{\star}
  :=
  \text{PD}_k(g^{\star},x^{\star}),
  \qquad
  \text{LGD}_k^{\star}
  :=
  \text{LGD}_k(g^{\star},x^{\star}),
  \qquad k=1,\dots,K.
\end{equation}
The collection $\left(\text{PD}_k^{\star},\text{LGD}_k^{\star}\right)_{k=1}^{K}$ provides a compact summary of how the geopolitical reverse stress scenario propagates across
economic sectors. This cross-sectional profile highlights concentration effects and heterogeneous sectoral
sensitivities to geopolitical risk. In particular, it clarifies whether the capital breach is primarily driven by a deterioration in default probabilities, by an increase in loss severities, or by a joint amplification of both
channels in specific sectors. By aggregating exposure-level responses into sector-level indicators, this representation
translates the abstract scenario coordinates $(g^{\star},x^{\star})$ into an economically interpretable
map of sectoral vulnerabilities, facilitating communication, risk diagnosis, and supervisory assessment. Such sector-level diagnostics are particularly useful for identifying clusters of exposure that warrant closer scrutiny under geopolitical stress and for designing targeted risk mitigation measures.

Finally, this interpretation extends beyond the point estimate $s^{\star}$ to the families of
plausible reverse stress scenarios introduced in Section~\ref{sec:multiple_scenarios}.
For any scenario $s=(g,x)\in\mathcal{S}_{\eta}$ or $s\in\mathcal{N}_{\varepsilon}$, scenario-specific
PDs and LGDs, as well as portfolio-level loss, can be evaluated, namely
\begin{equation}
  \text{PD}_i(g,x),\quad
  \text{LGD}_i(g,x),\quad
  L_q(g,x),\quad
  R(g,x),\quad
  d_{\Sigma}^{2}(s).
\end{equation}
This yields a family of severe but plausible reverse stress scenarios, associated with distinct
stressed outcomes and explicitly grounded in a transparent plausibility metric.

\section{Implementation of the reverse stress test}
\label{Section:Implementation}

This section outlines the operational framework for translating the theoretical model into a practical reverse stress testing exercise. It is structured around four components: an execution roadmap, the measurement of geopolitical risk using exogenous indices, the numerical implementation and optimization strategy, and a scenario selection methodology that reduces infinite admissible sets to a finite, decision-relevant set.

\subsection{Implementation Steps}

In practice, the reverse stress test framework can be implemented in four steps. First, the modeller selects a geopolitical risk indicator and a set of macro-financial factors, thereby defining the scenario vector $s=(g,x)^{\top}$, and estimates the reference covariance matrix $\Sigma$ from historical data; when relevant, this estimation can be adjusted for structural breaks, regime shifts, or other sources of non-stationarity. 
Second, the transmission mappings $(g,x)\mapsto \text{PD}$ and $(g,x)\mapsto \text{LGD}$ are calibrated, using either exposure-level information or sector-level aggregates, depending on data availability and modelling granularity. 
Third, the remaining building blocks of the capital framework are specified consistently with internal methodologies, including the dependence structure for defaults and any approximation used for risk-weighted assets. 
Fourth, the reverse-stress programme is solved numerically with a standard constrained optimiser (e.g.\ SQP or interior-point methods; \citealp{boyd2004convex}), yielding the least-unlikely capital-breaching scenario $s^{\star}$ and, when needed, the associated families $\mathcal{S}_{\eta}$ and $\mathcal{N}_{\varepsilon}$.

\begin{figure}[!htbp]
  \centering
  \begin{tikzpicture}[
    font=\small,
    box/.style={
      draw,
      rounded corners=2mm,
      align=left,
      inner sep=4mm,
      text width=0.88\linewidth
    },
    arrow/.style={-{Stealth[length=2.3mm]}, thick}
  ]

    \node[box] (b1) {%
      \textbf{Step 1: Scenario definition and reference model}\\ \smallskip
      Choice of geopolitical indicator and macro-financial factors; estimation of the reference covariance structure.
    };

    \node[box, below=10mm of b1] (b2) {%
      \textbf{Step 2: Calibration of transmission mappings}\\ \smallskip
      Specification of PD and LGD sensitivities to geopolitical and macro-financial shocks.
    };

    \node[box, below=10mm of b2] (b3) {%
      \textbf{Step 3: Specification of the capital framework}\\ \smallskip
      Modelling of portfolio aggregation, default dependence, and risk-weighted assets.
    };

    \node[box, below=10mm of b3] (b4) {%
      \textbf{Step 4: Numerical solution of the reverse stress test}\\ \smallskip
      Constrained optimisation yielding the point reverse stress test scenario and the associated sets.
    };

    \draw[arrow] (b1.south) -- (b2.north);
    \draw[arrow] (b2.south) -- (b3.north);
    \draw[arrow] (b3.south) -- (b4.north);

  \end{tikzpicture}
  \caption{Implementation roadmap of the geopolitical reverse stress testing framework.}
  \label{fig:implementation_roadmap}
\end{figure}
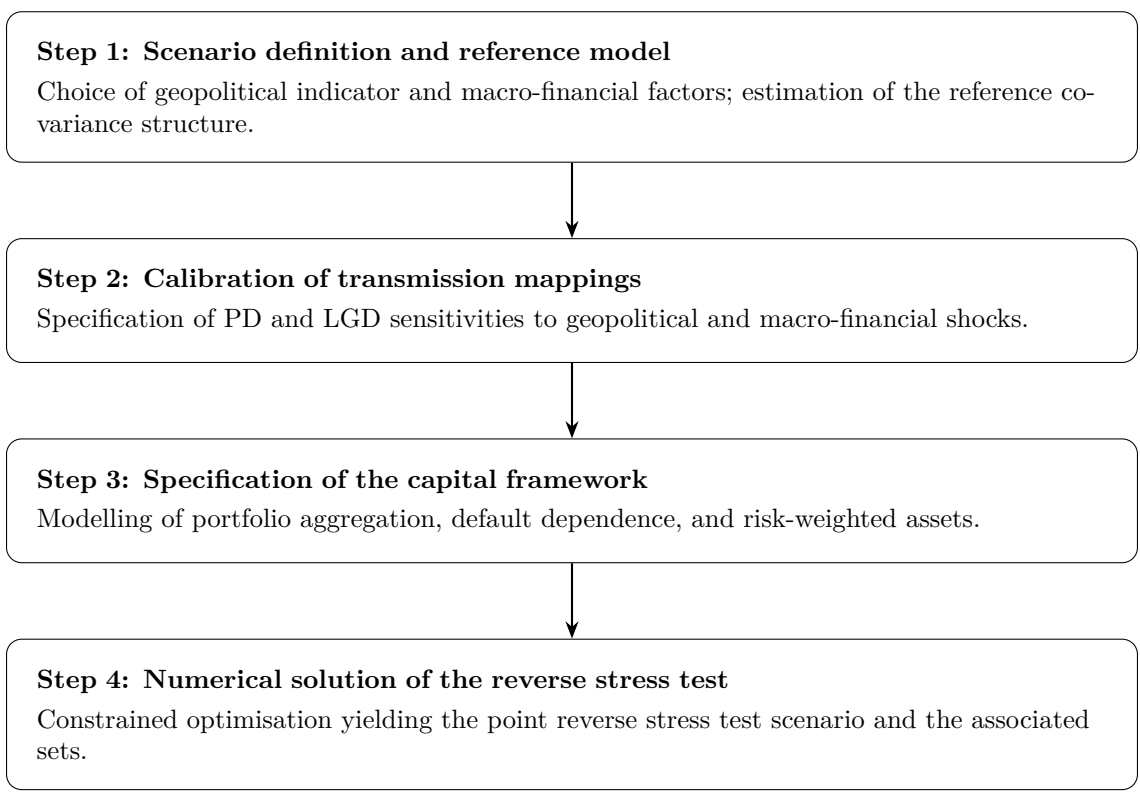

\subsection{Measurement and operationalization of geopolitical risk}
\label{Subsection: Geopolitical_index}

In the theoretical framework, the geopolitical component $G$ is modeled as an abstract scalar random variable capturing exogenous geopolitical shocks. In the empirical implementation, this latent component must be proxied by an observable indicator that provides a time-varying measure of geopolitical tensions while remaining plausibly exogenous to domestic macroeconomic and financial conditions.

A natural and widely used candidate is the Geopolitical Risk Index (GPR) developed by \citet{caldara2022geopolitical}. 
The GPR is constructed from the frequency of newspaper articles published in major international media outlets that report on geopolitical events related to wars, military tensions, terrorism, and international conflicts. 
By relying on automated text analysis of a consistent set of sources, the index delivers a transparent and replicable measure of geopolitical risk with broad international coverage and a long historical span. A key feature of the GPR is its decomposition into two complementary components. The Geopolitical Threats index captures rising tensions, military buildups, and hostile rhetoric that increase uncertainty without necessarily leading to immediate conflict. The Geopolitical Acts index, by contrast, reflects the realization or escalation of adverse geopolitical events, such as armed conflicts or major terrorist attacks. This distinction is particularly relevant in the context of stress testing and reverse stress testing. 
Geopolitical threats may affect expectations, investment decisions, risk premia, and credit conditions even in the absence of realized conflict, while geopolitical acts are more likely to generate abrupt, nonlinear adjustments in macro-financial variables.

Beyond its conceptual appeal, the GPR exhibits two properties that make it especially suitable for stress-testing applications. First, it is constructed independently of macroeconomic and financial variables, which supports its interpretation as an exogenous risk driver rather than an endogenous response to economic conditions. Second, its availability at a high frequency and over a long historical window allows both for the identification of typical geopolitical shocks and for the replay of historically observed episodes in scenario design.

While the GPR serves as our baseline proxy for the geopolitical component $G$, the framework developed in this paper is not tied to a specific indicator. Alternative measures of geopolitical risk can be accommodated provided they can be interpreted as exogenous shocks and embedded into the joint reference distribution of the scenario vector. 
These alternatives include machine-coded, news-based indicators derived from the GDELT database \citep{shawon2024assessing}, as well as proprietary indices developed by asset managers, such as Amundi’s Geopolitical Sentiment Tracker or BlackRock’s Geopolitical Risk Indicator, among many others. 

More generally, the geopolitical component $G$ may be defined at different levels of aggregation, depending on the scope of the analysis. Global indices are appropriate for assessing system-wide exposures, while regional or country-specific measures may be preferable for institutions with concentrated geographical portfolios. The stress-testing and reverse stress-testing methodology proposed in this paper remains agnostic with respect to this choice and can be applied to any geopolitical risk proxy that satisfies these minimal requirements.

\subsection{Numerical implementation and optimisation strategy}
\label{sec:num_impl}

From a computational standpoint, three ingredients are particularly important.

First, the numerical performance of the solver is greatly improved when the mapping
$s \mapsto R(s)$ is continuous and sufficiently smooth over the admissible domain.
In particular, when bounds on credit-risk parameters are imposed through hard truncations,
it is preferable to replace them with smooth saturating transformations, which preserve the
intended ranges while avoiding kinks that can impair convergence.
\medskip

Second, let $\Sigma = LL^\top$ and define the change of variables $y = L^{-1}s$
(equivalently, $s = Ly$). Under this reparameterisation, the plausibility penalty becomes
\[
\frac{1}{2} d_\Sigma^2(s)
=
\frac{1}{2} s^\top \Sigma^{-1} s
=
\frac{1}{2}\lVert y \rVert_2^2,
\]
so that the quadratic form is spherical in $y$-space: level sets are Euclidean spheres
rather than correlated ellipsoids in the original scenario space.\footnote{In whitened
coordinates, all components are on the same unitless scale: a one-unit move in any
coordinate increases the plausibility penalty in the same way. Equivalently, whitening
removes heterogeneous scaling and cross-correlation, improving conditioning and stabilising
step sizes in SQP or interior-point algorithms.}
Operationally, all distances (including those used to quantify ``diversity'') are computed in
$y$-space, while the capital constraint is evaluated as $R(Ly)\le R^\star$.
\medskip

Lastly, when the reverse stress event set $\mathcal{S}_{\mathrm{red}}$ is non-convex, or when the mapping
$R(\cdot)$ is highly non-linear, the constrained optimisation problem may admit multiple local optima.
To mitigate this issue, we adopt a multi-start strategy, running the solver from multiple
initialisations, such as random draws in the whitened space $y$ or a coarse grid over the geopolitical
dimension combined with conditional optimisation over the macro-financial factors. Among the resulting
feasible solutions, we retain the one with the smallest $d_\Sigma^2(\cdot)$.
This approach also facilitates the identification of alternative near-optimal scenarios, which can then
be ranked consistently according to their plausibility level.

\subsection{From feasible set to a finite scenarios list}
\label{subsec:scenario_list}

The set-valued outcomes introduced in Section~\ref{sec:multiple_scenarios}, the local neighbourhood $\mathcal{S}_\eta$ and the near-optimal set $\mathcal{N}_\varepsilon$, are infinite by construction: many combinations of geopolitical and macro-financial shocks can satisfy the breakdown condition $R(s)\le R^\star$.
In practice, however, reverse stress testing is a governance exercise and requires a \emph{finite} set of scenarios that can be discussed, challenged, and documented. We therefore complement the design point with a simple and auditable three-stage methodology that turns the relevant admissible set into a short list of $P$ representative scenarios. The stages are described below.

\vspace{-0.5cm}
\paragraph{Target set notation.}
Throughout this subsection, we denote by $\mathcal{E}$ the admissible set from which scenarios are to be selected:
\begin{equation}
\label{eq:target_set_E}
\mathcal{E}\in\{\mathcal{S}_\eta,\mathcal{N}_\varepsilon\},
\qquad
\mathcal{E}\subseteq \mathcal{S}_{\mathrm{red}}
\quad\text{with}\quad
\mathcal{S}_{\mathrm{red}}=\{s:\,R(s)\le R^\star\}.
\end{equation}
Recall that $\mathcal{S}_\eta$ is a local neighbourhood around the design point (e.g.\ $d_\Sigma^2(s-s^\star)\le\eta$)
and $\mathcal{N}_\varepsilon$ is a near-optimal plausibility set (e.g.\ under normality,
$d_\Sigma^2(s)\le d_\Sigma^2(s^\star)+\varepsilon$), both intersected with the red zone. Whenever a statement
depends on the choice of $\mathcal{E}$, we make it explicit.

The key point is that the framework already provides everything needed for selection: (i) a model-consistent
measure of \emph{plausibility} (via $d_\Sigma$ under normality, or any monotone equivalent under elliptical
models), and (ii) clear \emph{membership tests} for admissibility (capital breach and plausibility
constraints). The remaining task is not to ``invent'' scenarios, but to select a small subset that (a)
covers distinct narratives and (b) remains close to the plausibility frontier. This ``\textit{generate many
candidates, then reduce them}'' strategy is standard in stress testing, appearing in likelihood-based
approaches \citep{Glasserman2015}, systematic design methods \citep{flood2015systematic}, and recent
reverse-stress applications that compress large scenario clouds into a few interpretable families
\citep{BaesSchaanning2023}, in line with the multi-scenario perspective emphasised in
\citet{aikman2024multiple_scenarios_ecbwp2941}.  We compute all distances in whitened coordinates $y=L^{-1}s$ (so that $d_\Sigma^2(s)=\|y\|_2^2$), as
introduced in Section~\ref{sec:num_impl}.

\vspace{-0.5cm}
\paragraph{\textbf{Stage 1: Build a large candidate pool inside $\mathcal{E}$.}}
We first construct a pool $\mathcal{C}_N=\{s^{(1)},\dots,s^{(N)}\}$ with $N\gg P$ inside the target set
$\mathcal{E}$. The purpose of $\mathcal{C}_N$ is to make the final list robust and governance-ready: in
non-convex settings, several disconnected capital-breaching narratives may coexist, and a short reportable
list is meaningful only if it is selected from a pool that already covers these alternatives. We therefore
combine global exploration (to reach distinct narratives) and local exploration (to populate each narrative
with admissible variations).

\smallskip
\noindent\emph{Global exploration via anchors.}
We compute a set of feasible \emph{anchors} in the red zone
$\mathcal{S}_{\mathrm{red}}$, typically located in the vicinity of the breakdown boundary $\partial\mathcal{S}_{\mathrm{red}}=\{s:\,R(s)=R^\star\}$, up to numerical tolerances. To this end, the point reverse stress programme is solved from multiple initialisations in the whitened space. Distinct locally optimal feasible solutions are then retained after deduplication based on the $\|\cdot\|_2$ distance.
These anchors are not reported as such; they serve as economically interpretable centres for different narratives and protect the procedure against missing remote or disconnected near-optimal regions. Intuitively, anchors provide a ``skeleton'' of
the admissible region before we densify locally.

\smallskip
\noindent\emph{Anchors on a geopolitical intensity ladder.}
In our application, the scenario vector naturally decomposes as $s=(g,x)^\top$, where $g$ measures
geopolitical stress intensity and $x$ collects macro-financial factors. In practice, committees often
reason in terms of an \emph{intensity ladder} (mild / moderate / severe geopolitics) and ask a concrete,
operational question:
\emph{``If geopolitical stress is at level $g$, what macro-financial co-movements would be needed for a
capital breach?''} We answer this question by building additional anchors on a coarse grid
$\{g_j\}_{j=1}^J$. For each $g_j$, we compute the least-unlikely macro-financial companion shock that still
breaches the capital threshold:
\begin{equation}
\label{eq:conditional_anchor}
x^\star(g_j)\in \arg\min_{x}\ \frac12 d_\Sigma^2(g_j,x)
\quad \text{s.t.}\quad R(g_j,x)\le R^\star.
\end{equation}
The interpretation is direct and reportable: conditional on geopolitical intensity $g_j$, $x^\star(g_j)$ is
the most plausible macro-financial configuration that still produces a breach. This grid-based construction
is useful for two reasons. First, it guarantees explicit coverage across geopolitical intensities, which
helps avoid blind spots when the final list is debated. Second, it provides a transparent diagnostic of the
\emph{trade-off} between geopolitical and macro-financial stress needed to trigger a breach: inspecting the
mapping $g\mapsto x^\star(g)$ highlights levels of $g$ that require implausible macro co-movements (or,
conversely, levels where a breach is feasible under relatively plausible macro conditions). When the target
set is the near-optimal set ($\mathcal{E}=\mathcal{N}_\varepsilon$), we retain only those grid-based anchors
that also satisfy the near-optimal plausibility constraint (e.g.\ under normality,
$d_\Sigma^2(g_j,x^\star(g_j))\le d_\Sigma^2(s^\star)+\varepsilon$), ensuring that intensity-indexed anchors
remain close to the plausibility frontier.

\smallskip
\noindent\emph{Local exploration around anchors.}
Around each anchor $s^{\mathrm{anc}}$, we generate additional candidates by sampling perturbations in
whitened space and accepting points that satisfy the relevant membership tests. Let
$y^{\mathrm{anc}}=L^{-1}s^{\mathrm{anc}}$. We draw directions $u$ uniformly on the unit sphere in
$\mathbb{R}^d$ and radii $r\ge0$ from a compact interval, and set
\begin{equation}
\label{eq:local_draw}
y = y^{\mathrm{anc}} + r u,
\qquad s = Ly.
\end{equation}
We accept $s$ whenever $s\in\mathcal{S}_{\mathrm{red}}$ and $s\in\mathcal{E}$. Concretely, this means:
\vspace{-0,2cm}
\begin{itemize}
\item if $\mathcal{E}=\mathcal{S}_\eta$, accept when $d_\Sigma^2(s-s^\star)\le \eta$;
\item if $\mathcal{E}=\mathcal{N}_\varepsilon$ under normality, accept when
$d_\Sigma^2(s)\le d_\Sigma^2(s^\star)+\varepsilon$;
\item under elliptical models, replace $d_\Sigma^2(\cdot)$ by the corresponding monotone plausibility score
$\pi(\cdot)$ and apply the analogous threshold test.
\end{itemize}
When acceptance is low because the admissible region $\mathcal{E}$ is thin or highly curved, we replace
simple rejection sampling by a random-walk exploration constrained to $\mathcal{E}$ (hit-and-run / slice-type
moves): one proposes a move in $y$-space and keeps it if and only if the resulting $s=Ly$ remains in
$\mathcal{E}$. Importantly, this only requires a membership oracle (a yes/no check), not derivatives of
$R(\cdot)$. Step~1 thus produces a pool $\mathcal{C}_N$ that captures multiple narratives (via anchors,
including the geopolitical intensity ladder) and within-narrative sensitivity (via local exploration).

\paragraph{\textbf{Stage 2: Reduce the pool to $P$ representative, non-redundant scenarios.}}
We then convert $\mathcal{C}_N$ into a reportable list $\mathcal{C}_P=\{s^{[1]},\dots,s^{[P]}\}$ with $P$
small (typically 5--12). Since all candidates are admissible by construction, the reduction problem is
primarily one of diversity: we want to avoid reporting near-duplicates and instead span distinct parts of
the admissible region in a metric consistent with the reference model. We implement diversity selection in
whitened space using a maximin (farthest-point) rule,\footnote{This reduction step is conceptually close to
clustering, but it is not a $k$-means procedure. $k$-means selects centroids that minimise within-cluster
squared distances and these centroids need not correspond to admissible scenarios. By contrast, the
farthest-point (maximin) rule selects \emph{existing} candidates and explicitly targets coverage/diversity
in the reference-model metric (whitened Euclidean distance). When a clustering-based reduction is preferred,
a natural alternative is $k$-medoids in whitened space, which returns representative \emph{scenarios}
(medoids) rather than synthetic centroids.} initialised at the design point, $s^{[1]}=s^\star$. Let
$y^{(i)}=L^{-1}s^{(i)}$ denote the pool in $y$-space. We iteratively add the candidate that maximises its
distance to the already selected set:
\begin{equation}
\label{eq:farthest_point}
s^{[p]}
\in
\arg\max_{s\in\mathcal{C}_N\setminus \{s^{[1]},\dots,s^{[p-1]}\}}
\min_{\ell\le p-1}
\big\|L^{-1}(s-s^{[\ell]})\big\|_2,
\qquad p=2,\dots,P.
\end{equation}
This rule is deterministic, parameter-free, and directly auditable: each newly added scenario is the one
that most increases coverage of $\mathcal{E}$ under the plausibility-consistent metric. Figure~\ref{fig:scenario_list}
provides a visual summary of the candidate-generation step and of the farthest-point reduction that produces
a short, non-redundant list.

\begin{figure}[!htbp]
  \centering
  \includegraphics[width=0.85\linewidth]{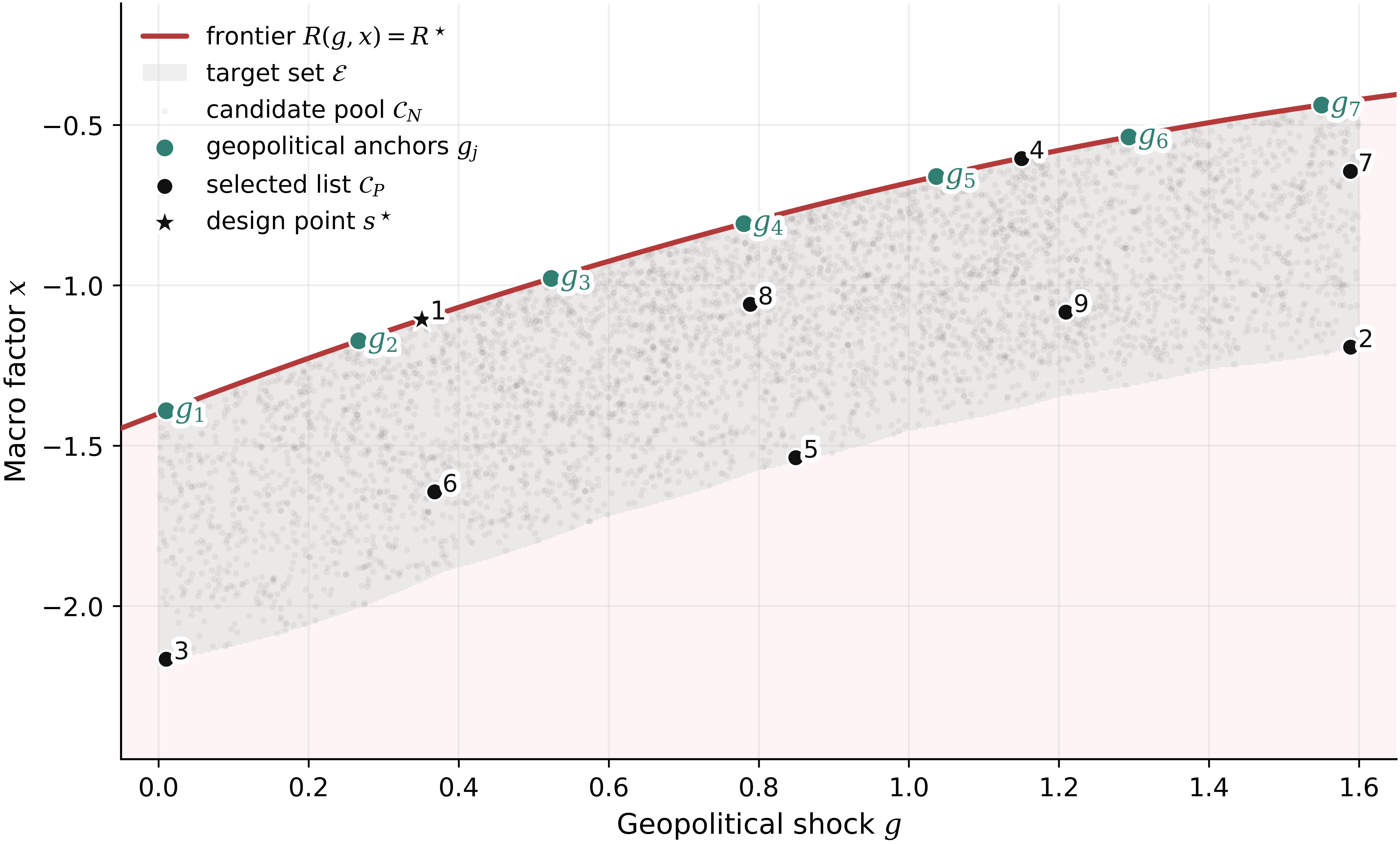}
  \caption{Illustration of the finite-scenario selection procedure in the $(g,x)$ plane.
  The red curve is the breakdown boundary $R(g,x)=R^\star$ and the shaded area is the target admissible set
  $\mathcal{E}$ (either $\mathcal{S}_\eta$ or $\mathcal{N}_\varepsilon$, depending on the object considered).
  Grey dots are admissible candidates in the pool $\mathcal{C}_N$, generated by global exploration around feasible
  anchors and local exploration in whitened space.
  Green markers $g_j$ indicate intensity-indexed anchors $(g_j,x^\star(g_j))$ defined by \eqref{eq:conditional_anchor}.
  Black markers are the reduced list $\mathcal{C}_P$ obtained by the farthest-point rule
  \eqref{eq:farthest_point}, initialised at the design point $s^\star$.}
  \label{fig:scenario_list}
\end{figure}

\paragraph{\textbf{Stage 3: Report severity, plausibility, and drivers in a standardised format.}}
For each selected scenario $s^{[p]}=(g^{[p]},x^{[p]})$, we report (i) severity via $R(s^{[p]})$, (ii)
plausibility via $d_\Sigma^2(s^{[p]})$ and the associated tail-probability score, and (iii) a driver
decomposition in whitened coordinates $y^{[p]}=L^{-1}s^{[p]}$. Narrative attribution follows directly from
$y^{[p]}$: we list the few coordinates with largest absolute values (together with their signs), which
identifies the dominant departures from baseline in a scale-free and model-consistent manner. Reporting
$g^{[p]}$ alongside these drivers makes the geopolitical intensity of each selected scenario explicit and
easy to compare across narratives. The resulting output is a finite, comparable, and auditable set of
severe-yet-plausible geopolitical reverse stress scenarios.

\section{Conclusion}
\label{Section:Conclusion}

This paper develops a formal and operational framework for reverse stress testing geopolitical risk in corporate credit portfolios. Starting from a prescribed capital breakdown outcome, we embed an explicit geopolitical risk component into a joint macro-financial scenario vector and map scenarios into stressed probabilities of default and losses given default. These stressed credit-risk parameters are propagated to portfolio tail losses through a tractable latent-factor
structure and translated into a stressed CET1 ratio by combining scenario-dependent losses and risk-weighted assets. Reverse stress testing is then formulated as a constrained optimisation problem that identifies the geopolitical point reverse stress test scenario or design point, defined as the
most likely macro-geopolitical scenario under which the capital constraint is violated, given a reference distribution for joint shocks. Under a Gaussian reference model, the geopolitical reverse stress test admits a transparent geometric interpretation. The point reverse stress scenario is characterised as the capital-breaching scenario that minimises the Mahalanobis distance to baseline conditions. This representation yields a scale-free plausibility score with a direct chi-squared calibration. Together, these properties provide a coherent probabilistic benchmark for assessing the severity of geopolitical stress scenarios and for comparing them across portfolios, model specifications, and historical episodes.

Beyond the identification of a single design point, the framework explicitly accommodates sets of plausible reverse stress outcomes. A local neighbourhood around the point reverse stress scenario captures uncertainty and supports sensitivity analysis in the vicinity of the most plausible capital breakdown. Complementarily, an $\varepsilon$-near-optimal reverse stress scenario set collects alternative capital-breaching scenarios whose plausibility remains close to that of the design point, even when these scenarios are distant in the scenario space. This set-valued construction is particularly relevant when the reverse stress event set is non-convex or when nonlinear transmission mechanisms generate multiple near-optimal narratives. Together, these objects address a central governance requirement of reverse stress testing: moving from a single internally coherent breaking scenario to a structured range of severe yet credible scenarios that can be ranked, discussed, and communicated in a consistent manner.

The proposed framework is intentionally modular. The scenario space can accommodate different geopolitical risk proxies and richer macro-financial representations, including alternative dependence structures or heavy-tailed reference distributions. The credit-risk block can be implemented either at the exposure level or at an aggregated sectoral level, and the capital mechanics can be aligned with internal methodologies for PD, LGD, and RWA. From an implementation perspective, the framework delivers outputs that are directly relevant for risk management and supervisory dialogue: a point reverse stress scenario anchored in an explicit geopolitical coordinate, an interpretable macro-financial companion shock, a decomposition of the capital breach through stressed PD and LGD channels, and a plausibility score that can be benchmarked against historical geopolitical episodes.

Several extensions would further enrich the framework. A natural direction is to allow for regime changes and tail dependence in the reference distribution, better reflecting the discrete and state-dependent nature of geopolitical events. Another extension is to broaden the outcome constraint to joint solvency and liquidity conditions, in line with supervisory interest in funding stress and operational resilience. Finally, a comprehensive empirical application estimating the transmission parameters and confronting the implied geopolitical severity $g^{\star}$ with historical and event-based benchmarks would provide an additional validation layer and clarify the practical range of plausible geopolitical reverse stress scenarios across institutions and portfolios.

\baselineskip=1 \normalbaselineskip

\bibliographystyle{apalike}
\bibliography{reverse_stress}

\newpage

\appendix

\section{Appendix}

\subsection{Multivariate Student scenarios}
\label{Appendix:student_reference}

This appendix provides a heavy-tailed alternative to the Gaussian reference model used in the main text.
All admissible sets (capital-breaching constraints and any additional feasibility restrictions) are unchanged; only the \emph{reference} distribution used to quantify plausibility is modified.
\\\\
Let $s=(g,x)^\top\in\mathbb{R}^d$ denote the scenario vector.
Assume that
\begin{equation}
  s \sim t_\nu(0,\Sigma),
  \label{eq:t_reference}
\end{equation}
with degrees of freedom $\nu>0$ and scatter matrix $\Sigma\succ0$.
The corresponding density is
\begin{equation}
  p_\nu(s)
  =
  c_{\nu,d}\,|\Sigma|^{-1/2}\left(1+\frac{1}{\nu}\,d_\Sigma^2(s)\right)^{-\frac{\nu+d}{2}},
  \qquad
  d_\Sigma^2(s):=s^\top\Sigma^{-1}s,
  \label{eq:t_density}
\end{equation}
where $c_{\nu,d}=\frac{\Gamma\!\left(\frac{\nu+d}{2}\right)}{\Gamma\!\left(\frac{\nu}{2}\right)(\nu\pi)^{d/2}}$.
Up to an additive constant (irrelevant for optimisation over $s$ when $\Sigma$ is fixed), the Student plausibility penalty is the negative log-density
\begin{equation}
  \ell_\nu(s)
  :=
  \frac{\nu+d}{2}\,
  \log\!\left(1+\frac{1}{\nu}\,d_\Sigma^2(s)\right).
  \label{eq:t_neglog}
\end{equation}
Let $\mathcal{S}_{red}$ denote the feasible set defined in Section~\ref{sec:rst_optim}, i.e.\ the set of scenarios satisfying the breakdown constraint and any additional restrictions.
Under the Student $t$ reference model, the least-unlikely reverse-stress scenario is obtained by
\begin{equation}
  \min_{s\in\mathcal{S}_{red}} \ \ell_\nu(s).
  \label{eq:rst_student}
\end{equation}
Since the mapping $u\mapsto \log(1+u/\nu)$ is strictly increasing on $[0,\infty)$ and the prefactor $(\nu+d)/2$ is positive, $\ell_\nu(s)$ is a strictly increasing transformation of $d_\Sigma^2(s)$.
Therefore,
\begin{equation}
  \arg\min_{s\in\mathcal{S}_{red}} \ \ell_\nu(s)
  \;=\;
  \arg\min_{s\in\mathcal{S}_{red}} \ d_\Sigma^2(s),
  \label{eq:argmin_equiv_student}
\end{equation}
which coincides with the Gaussian benchmark in the main text (where the objective is proportional to $d_\Sigma^2(s)$).
Hence, for a fixed scatter matrix $\Sigma$, switching from a Gaussian to a Student $t$ reference distribution leaves the optimisation problem unchanged in terms of its minimiser(s): the optimal reverse-stress scenario $s^\star$ is the same, and only its \emph{probabilistic interpretation} changes.
\\
To calibrate plausibility levels probabilistically, we need the distribution of $d_\Sigma^2(s)$ under \eqref{eq:t_reference}.
By the standard scale-mixture representation of the multivariate Student $t$ distribution (e.g.\ \citet[Ch.~5]{casella2002statistical}), let

\[
Z\sim \mathcal{N}(0,\Sigma),
\qquad
W\sim \chi^2_\nu,
\qquad
Z\ \perp\ W,
\]
and define
\[
S:=\sqrt{\frac{\nu}{W}}\,Z.
\]
Then $S\sim t_\nu(0,\Sigma)$.
Moreover,
\[
d_\Sigma^2(S)
=
S^\top\Sigma^{-1}S
=
\frac{\nu}{W}\,Z^\top\Sigma^{-1}Z.
\]
Since $Z^\top\Sigma^{-1}Z\sim \chi^2_d$ and is independent of $W$, we obtain
\begin{equation}
  \frac{1}{d}\,d_\Sigma^2(S)
  =
  \frac{Z^\top\Sigma^{-1}Z/d}{W/\nu}
  \ \sim\ F_{d,\nu},
  \label{eq:t_mahalanobis_F}
\end{equation}
i.e.\ the squared Mahalanobis distance (rescaled by $d$) follows a Fisher distribution with $(d,\nu)$ degrees of freedom.
\\\\
All neighbourhood constructions introduced in the main text (e.g.\ $\eta$-balls around $s^\star$ and $\varepsilon$-near-optimal sets) remain unchanged under a Student $t$ reference model.
Indeed, these sets are defined as intersections between the capital-breaching region $\mathcal{S}_{\mathrm{red}}$ and level sets of the Mahalanobis distance $d_\Sigma^2(\cdot)$, and therefore depend only on the feasible region and on thresholds for $d_\Sigma^2$.
What changes under a Student $t$ reference is solely the probabilistic calibration: the mapping from a given value of $d_\Sigma^2$ to a credibility level, through \eqref{eq:t_mahalanobis_F}.
\\\\
The plausibility of the optimal scenario $s^\star$ is summarised by the level $d_\Sigma^2(s^\star)$.
Under Gaussianity, $d_\Sigma^2(s)\sim\chi^2_d$, and the tail probability
\[
\mathbb{P}\!\left(d_\Sigma^2(s)\ge d_\Sigma^2(s^\star)\right)
=
1-F_{\chi^2_d}\!\left(d_\Sigma^2(s^\star)\right),
\]
where $F_{\chi^2_d}$ denotes the cumulative distribution function of $\chi^2_d$, provides a scale-free plausibility score.
\\\\
Under the Student $t$ reference model with $\nu$ degrees of freedom, one has $\frac{1}{d}d_\Sigma^2(s)\sim F_{d,\nu}$, and therefore
\[
\mathbb{P}\!\left(d_\Sigma^2(s)\ge d_\Sigma^2(s^\star)\right)
=
\mathbb{P}\!\left(F_{d,\nu}\ge \frac{d_\Sigma^2(s^\star)}{d}\right)
=
1-F_{F_{d,\nu}}\!\left(\frac{d_\Sigma^2(s^\star)}{d}\right),
\]
where $F_{F_{d,\nu}}$ denotes the cumulative distribution function of the Fisher distribution with $(d,\nu)$ degrees of freedom.

As illustrated in Fig.~\ref{fig:gauss_vs_student_samecov}, the choice of reference distribution affects the \emph{tail} plausibility attached to distant scenarios, even when the covariance matrix is held fixed. In particular, for a given Mahalanobis radius $d_\Sigma^2(s)=s^\top\Sigma^{-1}s$, a Student-$t$ reference (matched to the same covariance) assigns larger upper-tail probabilities $\pi(s)=\mathbb{P}(d_\Sigma^2(S)\ge d_\Sigma^2(s))$, hence smaller rarity $-\log_{10}\pi(s)$ in the extremes.

\begin{figure}[!ht]
  \centering
  \includegraphics[width=\linewidth]{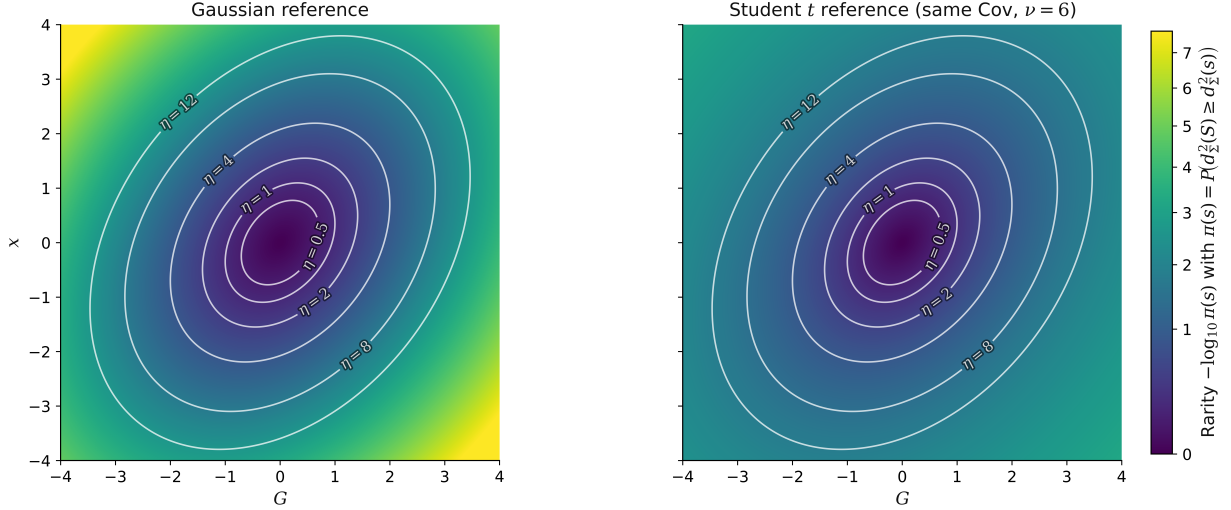}
  \caption{Gaussian vs.\ Student-$t$ reference plausibility on a common colour scale (scenario $s=(g,x)^\top$). Colours show rarity $-\log_{10}\pi(s)$ with $\pi(s)=\mathbb{P}(d_\Sigma^2(S)\ge d_\Sigma^2(s))$ and $d_\Sigma^2(s)=s^\top\Sigma^{-1}s$. The Student-$t$ panel uses $\nu=6$ and is rescaled to match the same covariance $\mathrm{Cov}(S)=\Sigma$. Contours are iso-levels of $d_\Sigma^2(s)=\eta$.}
  \label{fig:gauss_vs_student_samecov}
\end{figure}

\subsection{Sector-level reverse stress test with geopolitical transmission channels}
\label{Appendix:Sectoral_RST}

For practical implementation, it is often convenient to work directly with sector-level aggregates rather than exposure-level credit-risk parameters. Suppose that the portfolio is partitioned into $K$ sectors. For each sector $k$, we define the exposure at default $\text{EAD}_k$, baseline
credit-risk parameters $\text{PD}_k^0$ and $\text{LGD}_k^0$, and the asset correlation parameter $\rho_k$.
We further define sector exposure weights as
\begin{equation}
  w_k
  :=
  \frac{\text{EAD}_k}{\sum_{j=1}^{K} \text{EAD}_j}.
\end{equation}

The scenario vector remains $s=(g,x)^{\top}$, but the mapping from geopolitical and macro-financial conditions to credit-risk parameters is specified directly at the sector level.
A sector-level logit specification for probabilities of default is given by
\begin{equation}
  \log\!\left(
  \frac{\text{PD}_k(g,x)}{1-\text{PD}_k(g,x)}
  \right)
  =
  \log\!\left(
  \frac{\text{PD}_k^0}{1-\text{PD}_k^0}
  \right)
  + b_k^{\top} x + d_k g,
\end{equation}
while losses given default are modelled as
\begin{equation}
  \text{LGD}_k(g,x)
  =
  \text{LGD}_k^0 + c_k^{\top} x + e_k g.
\end{equation}

The coefficients $d_k$ and $e_k$ capture the direct transmission of geopolitical risk into sector-level default probabilities and loss severities, respectively. Sectors that are more exposed to geopolitical tensions—such as defence, energy, transportation, or export-intensive manufacturing—can therefore be assigned larger values of $|d_k|$ or $|e_k|$, reflecting higher sensitivity to geopolitical shocks.

Under this sectoral specification, an approximation to the portfolio loss quantile takes the form
\begin{equation}
  L_q(g,x)
  \approx
  \sum_{k=1}^{K}
  \text{EAD}_k\,\text{LGD}_k(g,x)\,
  \Phi\!\left(
  \frac{
    \Phi^{-1}\!\big(\text{PD}_k(g,x)\big)
    + \sqrt{\rho_k}\,\Phi^{-1}(q)
  }{
    \sqrt{1-\rho_k}
  }
  \right).
\end{equation}

A similar approximation can be employed for risk-weighted assets,
\begin{equation}
  \text{RWA}(g,x)
  =
  \sum_{k=1}^{K}
  \text{EAD}_k\,\text{RW}_k(g,x),
\end{equation}
where sector-level risk weights are given by
\begin{equation}
  \text{RW}_k(g,x)
  =
  \text{LGD}_k(g,x)
  \left[
  \Phi\!\left(
  \frac{
    \Phi^{-1}\!\big(\text{PD}_k(g,x)\big)
    + \sqrt{\rho_k}\,\Phi^{-1}(q)
  }{
    \sqrt{1-\rho_k}
  }
  \right)
  - \text{PD}_k(g,x)
  \right].
\end{equation}
When appropriate, this expression may be further approximated linearly as
\begin{equation}
  \text{RW}_k(g,x)
  =
  \alpha_k
  + \beta_k^{\text{RW}}
  \big(\text{PD}_k(g,x)-\text{PD}_k^0\big).
\end{equation}

The reverse stress testing problem is unchanged in form.
The objective remains to minimise $\tfrac{1}{2}d_{\Sigma}^{2}(s)$ over the set $\{s:\,R(s)\le R^{\star}\}$, where the CET1 ratio $R(s)$ is now computed from the sector-level mappings $(g,x)\mapsto(\text{PD}_k,\text{LGD}_k)$. In this formulation, the coefficients $d_k$ and $e_k$ provide an explicit and transparent parametrisation of the transmission of geopolitical risk into sectoral credit losses.

\end{document}